\documentclass[10pt,twocolumn]{article}
\usepackage[letterpaper,margin=0.72in,columnsep=0.25in]{geometry}
\usepackage[T1]{fontenc}
\usepackage{lmodern}
\usepackage{amsmath,amssymb,amsthm,mathtools,bm}
\usepackage{abstract}
\usepackage{graphicx,booktabs,array}
\usepackage[dvipsnames]{xcolor}
\usepackage{tikz}
\usetikzlibrary{arrows.meta,positioning,calc}
\usepackage{microtype}
\usepackage[numbers,sort&compress]{natbib}
\usepackage[colorlinks=true,linkcolor=NavyBlue,citecolor=NavyBlue,urlcolor=NavyBlue]{hyperref}
\definecolor{flowblue}{HTML}{285F91}
\definecolor{flowgreen}{HTML}{287A68}
\definecolor{flowred}{HTML}{AF3F43}
\definecolor{flowgray}{HTML}{52616C}

\newcommand{\NS}{\mathcal{NS}}

\newcommand{\R}{\mathbb{R}}
\newcommand{\one}{\mathbf{1}}
\newcommand{\PC}{\mathcal{P}}
\newcommand{\CC}{\mathcal{C}}
\newcommand{\U}{P_{\mathrm{unif}}}

\newtheorem{theorem}{Theorem}
\newtheorem{proposition}[theorem]{Proposition}
\newtheorem{corollary}[theorem]{Corollary}
\newtheorem{lemma}[theorem]{Lemma}
\theoremstyle{definition}
\newtheorem{definition}[theorem]{Definition}
\newtheorem{remark}[theorem]{Remark}

\tikzset{box/.style={draw=flowblue,rounded corners=2pt,fill=flowblue!5,
align=center,inner sep=5pt,font=\small},
arr/.style={-{Stealth[length=2mm]},thick,draw=flowgray}}
\title{Certifying information flow in multi-agent behaviours}
\author{%
M.~Haider Akbar\textsuperscript{1}\thanks{makbar25@ku.edu.tr}
\qquad
Onur Pusuluk\textsuperscript{2}\thanks{onur.pusuluk@khas.edu.tr}
\qquad
\"Ozg\"ur E.~M\"ustecapl\i o\u{g}lu\textsuperscript{3,4}\thanks{omustecap@ku.edu.tr}\\[8pt]
{\small\textsuperscript{1}Department of Computational Sciences and Engineering,
Ko\c{c} University, 34450 Sar\i yer, Istanbul, T\"urkiye}\\
{\small\textsuperscript{2}Faculty of Engineering and Natural Sciences,
Kadir Has University, Istanbul 34083, Turkey}\\
{\small\textsuperscript{3}Department of Physics, Ko\c{c} University,
34450 Sar\i yer, Istanbul, T\"urkiye}\\
{\small\textsuperscript{4}T\"UB\.ITAK Research Institute for Fundamental
Sciences (TBAE), 41470 Gebze, T\"urkiye}%
}
\date{}

\begin{document}
\raggedbottom

\twocolumn[
\begin{@twocolumnfalse}
\begin{center}
\end{center}
\vspace{-1em}
\maketitle
\begin{abstract}
We study which input--output behaviours of several agents can be produced by a declared communication schedule, that is, a fixed sequence of senders, recipients, message alphabets, and output times, together with shared randomness. Since a schedule admits only finitely many deterministic protocols, membership is a linear program, and an infeasible target is excluded by a separating inequality. Under explicit operational assumptions, we prove that the affine hull of the achievable behaviours is described exactly by the equalities stating that each joint output statistic depends only on the inputs that can reach the agents involved, so that a violated equality identifies both the statistic and the missing input. One binary message per event already attains this affine hull, so message capacity restricts which behaviours are achievable without being visible to any equality. We apply the result to one-bit broadcast and relay hierarchies for two and three agents, and to low-order projections of the local polytope.
\end{abstract}
\vspace{1.5em}
\end{@twocolumnfalse}
]
\saythanks
\section{Introduction}
\label{sec:intro}

Correlations do not by themselves reveal how information was exchanged. Several agents can respond coherently because they communicate during a task, because they share information established beforehand, or both. Bell scenarios make this distinction precise: deterministic local response tables are the vertices of a polytope, shared randomness generates their convex hull, and Bell inequalities exclude behaviours outside it~\cite{CHSH1969,Brunner2014}. The same convex approach has been extended to bounded classical communication~\cite{BaconToner2003,Maxwell2014,Cruzeiro2019}, to lower bounds on average communication cost~\cite{Pironio2003}, and to more general causal correlations~\cite{Abbott2016}.

A separating inequality shows that a target is incompatible with a resource model, but it need not indicate which information pathway is missing. Affine equalities carry complementary information. The no-signalling equalities state that a subsystem's statistics do not depend on a remote input, and the local and no-signalling polytopes have the same affine hull although their inequalities differ~\cite{Pironio2005}. An equality can therefore diagnose a forbidden dependence, whereas satisfying every equality does not establish membership.

In this work the resource is a fixed communication schedule, specified by senders, recipients, message alphabets, and output times. Under explicit operational assumptions, we show that the affine hull of the behaviours achievable with a schedule is determined by reachability: a joint statistic of the outputs of a set of agents can depend only on inputs that reach at least one of those agents, and every such dependence is attained in the affine sense (Theorem~\ref{thm:affine}). In this work the resource is a fixed communication schedule, specified by senders, recipients, message alphabets, and output times. Under explicit operational assumptions, we show that the affine hull of the behaviours achievable with a schedule is determined by reachability: a joint statistic of the outputs of a set of agents can depend only on inputs that reach at least one of those agents, and every such dependence is attained in the affine sense (Theorem~\ref{thm:affine}). In particular, one binary message per event already saturates the affine hull, so no affine equality can detect a shortage of message capacity. Equalities and inequalities therefore answer different questions: a violated equality identifies which input failed to reach which agents, whereas only an inequality can show that a channel was too narrow.

The same moment coordinates show what is lost under low-order projections. We show that every projection of the local polytope onto moments of order at most $K$ contains an explicit $\ell_1$ ball around the uniform behaviour. For GHZ states, the proper reduced states coincide with those of a separable mixture, a known obstruction to few-body detection~\cite{Tura2015,Tan2016} that is established for all proper orders in Ref.~\cite[Sec.~VI]{Tan2016}. For $\sigma_x/\sigma_y$ settings we give an explicit moment-space proof of this obstruction together with a quantitative margin. This complements the detection of other many-body states with two-body Bell inequalities~\cite{Tura2014,Tura2015}.

Finally, we consider agents and observers. Quantum decision theory models choices with Hilbert-space structures and with utility and attraction factors~\cite{YS2010,Favre2016,Yukalov2020}, and network extensions include explicit information exchange~\cite{Yukalov2018}; recent formulations retain the utility attraction decomposition in classical probabilistic terms and model repeated exchange and imitation among groups of agents~\cite{YY2026a,YY2026b}. We show that an interference term in an individual response rule does not produce nonlocality when each agent uses only its own input and shared prior information, and that a marginal-preserving joint correction changes only the dependence between outputs. Extended Wigner's friend arguments~\cite{Brukner2018,FR2018,Bong2020} also declare which agents measure, which records are passed to whom, and when. We locate them in the present framework: Local Friendliness violations lie in the no-signalling affine hull and can therefore be certified only by inequalities, one message carrying a setting suffices to reproduce them, and in tripartite scenarios where Local Friendliness coincides with locality the GHZ results apply directly. The Frauchiger--Renner ~\cite{FR2018} argument involves no setting choices and lies outside the behaviour-level certificates considered here. These are conditional mathematical statements about input-output behaviours; they do not assess decision theory or interpretations of quantum theory empirically.

Section~\ref{sec:moments} introduces behaviours and moment coordinates, and Table~\ref{tab:notation} collects the principal notation.
Section~\ref{sec:affine} states the main result and the associated certificates,
Sec.~\ref{sec:comm} develops declared hierarchies for two and three agents,
Sec.~\ref{sec:loss} treats low-order projections, Sec.~\ref{sec:agents} discusses decision rules and extended Wigner's friend scenarios, and
Sec.~\ref{sec:disc} relates the results to prior work.
Appendix~\ref{app:example} applies these results to a single three-agent selection task, contrasting equality and inequality certificates.

\section{Behaviours and moments}
\label{sec:moments}

There are $N$ agents, indexed by $[N]=\{1,\ldots,N\}$. Agent $i$ receives an input $x_i$ from a finite set $X_i$ of size $d_i$ and returns $a_i\in\{0,1\}$; its signed output is $A_i=(-1)^{a_i}$. A behaviour is a family
\begin{equation}
 P(a\mid x)\ge0,\qquad \sum_a P(a\mid x)=1
 \label{eq:behaviour}
\end{equation}
on the complete Cartesian input set $X=\prod_iX_i$. The shared random variable $\lambda$, with distribution $\mu$, is independent of $x$. Mutual independence of the inputs is not needed; input coverage and independence from $\lambda$ are the relevant assumptions. The uniform behaviour $\U$ has $P(a\mid x)=2^{-N}$.

In the absence of communication,
\begin{equation}
 P(a\mid x)=\int\prod_i p_i(a_i\mid x_i,\lambda)\,d\mu(\lambda).
 \label{eq:local}
\end{equation}
Absorbing all private randomness into $\lambda$, each agent follows a deterministic table $a_i=\alpha_i(x_i)$. The local polytope $\PC_0$ is the convex hull of the $2^{\sum_i d_i}$ resulting deterministic behaviours, each of which is extreme. For two agents with two binary inputs this is equivalent to Fine's joint distribution for the four potential outcomes~\cite{Fine1982}.

For nonempty $S\subseteq[N]$, the moment
\begin{equation}
 E_S(x)=\sum_a\Bigl[\prod_{i\in S}(-1)^{a_i}\Bigr]P(a\mid x)
 \label{eq:moment}
\end{equation}
is the expected product of the signed outputs in $S$ at context $x$; when agents
have names we write $E_A$, $E_{AB}$, and so on. Fourier inversion on the binary
output cube \cite{ODonnell2014} gives, at every context,
\begin{equation}
 P(a\mid x)=2^{-N}\Bigl[1+\sum_{S\ne\emptyset}
 \Bigl(\prod_{i\in S}(-1)^{a_i}\Bigr)E_S(x)\Bigr],
 \label{eq:inversion}
\end{equation}
with or without communication. A behaviour is no-signalling if each $E_S$
depends only on $x_S$; equivalently, each subsystem marginal ignores the other
agents' inputs. Local behaviours are no-signalling, and $\U$ has all moments
equal to zero.

If each moment $E_S$ may depend only on the inputs of a set $R(S)$ of agents, then listing each $E_S$ once as a function of $x_{R(S)}$, and keeping moments of order at most $K$, gives
\begin{equation}
 D_R(K)=\sum_{\substack{S\ne\emptyset\\|S|\le K}}\ \prod_{j\in R(S)}d_j
 \label{eq:count}
\end{equation}
coordinates. With $R(S)=[N]$ for every $S$, $D_{\mathrm{full}}:=D_R(N)=(2^N-1)\prod_id_i$ is the dimension of the set of all behaviours. With the no-signalling restriction $R(S)=S$ we write $D_0(K)$; then $D_0(N)=\prod_i(d_i+1)-1$, and for equal $d_i=d$, $D_0(K)=\sum_{m=1}^K\binom Nm d^m$, so that $D_0(2)=2N^2$ for $d=2$.

The projection $\Pi_K$ keeps only the moments of order at most $K$, and $\CC_K=\Pi_K(\PC_0)$ is the image of the local polytope.

\begin{proposition}
\label{prop:localdim}
$\dim\CC_K=D_0(K)$.
\end{proposition}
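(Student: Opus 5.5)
We prove two inequalities: $\dim\CC_K\le D_0(K)$ and $\dim\CC_K\ge D_0(K)$. We work in the moment coordinates of Sec.~\ref{sec:moments}, with each $E_S$ for $1\le|S|\le K$ regarded as a function on the full context set $X$. The target space is the linear subspace $V_K$ of tuples $(E_S)_{|S|\le K}$ in which each $E_S$ depends only on $x_S$. Its dimension is $\sum_{|S|\le K}\prod_{j\in S}d_j=D_0(K)$.

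\emph{Upper bound.} Every deterministic local behaviour has $E_S(x)=\prod_{i\in S}(-1)^{\alpha_i(x_i)}$, which depends only on $x_S$. Local behaviours are no-signalling, so $\CC_K\subseteq V_K$. Hence $\dim\CC_K\le D_0(K)$. Since $\U\in\PC_0$ maps to the origin, the affine hull of $\CC_K$ is a linear subspace. It therefore suffices to show that the vertex images linearly span $V_K$.

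\emph{Lower bound via characters.} We use the product basis of $V_K$. For each agent $i$, choose a basis $\{f_{i,0}\equiv1,f_{i,1},\dots,f_{i,d_i-1}\}$ of functions on $X_i$. For each $S$, the functions $\prod_{i\in S}f_{i,k_i}(x_i)$ form a basis of functions of $x_S$. Now let $\alpha$ be drawn uniformly at random, independently across agents, and for each $i$ consider $\sigma_i(x_i)=(-1)^{\alpha_i(x_i)}$. For any tuple of coefficient functions $c_i:X_i\to\R$, weight each deterministic behaviour $\alpha$ by $w(\alpha)=\prod_{i}\bigl(1+\sum_{x_i}c_i(x_i)\sigma_i(x_i)\bigr)$. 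Over uniform $\alpha$, $\mathbb E[\sigma_i(x)\sigma_i(y)]=\delta_{xy}$ and $\mathbb E[\sigma_i]=0$. The resulting signed combination $\sum_\alpha w(\alpha)E_S^{(\alpha)}$ is then proportional to $\prod_{i\in S}c_i(x_i)$. Products of arbitrary single-agent functions $\prod_{i\in S}c_i(x_i)$ span all functions of $x_S$. This gives, for each fixed $S$, linear combinations of vertex images that realise an arbitrary function of $x_S$ in the $E_S$ slot. What remains is to check what happens in the other slots $T\ne S$.

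\emph{Isolating one slot.} This is the main obstacle. The cleanest route is to use the weight $w_S(\alpha)=\prod_{i\in S}c_i\cdot\sigma_i$, with no factor for agents outside $S$. Then $\mathbb E_\alpha[w_S E_T]=\prod_{i\in S}\mathbb E[(c_i\cdot\sigma_i)\,\cdot]\,\prod_{i\in T\setminus S}\mathbb E[\sigma_i]$, and the factor for an agent in $S\setminus T$ is $\mathbb E[c_i\cdot\sigma_i]=0$. So the combination vanishes unless $T=S$, and for $T=S$ it equals $\prod_{i\in S}c_i(x_i)$. Thus the uniform-average functional yields exactly the vector supported on slot $S$ with entry $\prod c_i$. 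Ranging over $S$ and the $c_i$ spans $V_K$, so the vertex images span $V_K$, and $\dim\CC_K=D_0(K)$.
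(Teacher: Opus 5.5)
Your proof is correct, but it takes a different route from the paper. The paper argues in one line: it cites Pironio's Theorem~1, which says that $\PC_0$ already has the full no-signalling dimension $D_0(N)$. It then notes that a coordinate projection of a full-dimensional set is full-dimensional in the retained coordinates.

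You instead work directly at order $K$. No-signalling gives the upper bound. For the lower bound you use character orthogonality under the uniform distribution on deterministic tables, namely $\mathbb E[\sigma_i(x)\sigma_i(y)]=\delta_{xy}$ together with independence across agents. Taking $c_i(x_i)=\one\{x_i=r_i\}$, this produces each basis vector $\bm e_{S,r}$ as a signed combination of vertex images.

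This is the same mechanism the paper uses for the sector projector $L_S$ in the proof of Theorem~\ref{thm:affine} and for the noisy-GHZ weights. At $K=N$ it reproves the cited dimension count (for binary outputs) from scratch. Your version is therefore self-contained, whereas the paper's is shorter because it relies on the citation.

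The paper's quantitative companion, Lemma~\ref{lem:l1}, realizes $\pm\bm e_{S,r}$ as genuine local behaviours, and hence yields the $\ell_1$ ball; a spanning argument alone does not give this. Your computation gives it anyway. The weights $2^{-\sum_i d_i}\bigl(1+\tau\prod_{i\in S}\sigma_i(r_i)\bigr)$ are nonnegative, and they sum to one because $\mathbb E\prod_{i\in S}\sigma_i(r_i)=0$. The resulting convex combination of vertices is exactly $\tau\bm e_{S,r}$.

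Two small points:
\begin{itemize}
\item Your first weight $\prod_i(1+c_i\cdot\sigma_i)$ isolates nothing, since it places $\prod_{i\in T}c_i(x_i)$ in every slot $T$. That paragraph can be dropped.
\item The factorization in your last paragraph should be written over the three groups $S\cap T$, $S\setminus T$ and $T\setminus S$. The last two groups contribute vanishing factors whenever $T\ne S$.
\end{itemize}
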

\begin{proof}
By Ref.~\cite[Theorem~1]{Pironio2005}, the local polytope has the full no-signalling dimension $D_0(N)$ and is therefore full-dimensional in the coordinates $E_S(x_S)$; so is its projection onto the $D_0(K)$ retained coordinates.
\end{proof}

For three agents with binary inputs, $\Pi_2$ retains $18$ of the $26$ independent coordinates. The following definition and lemma make Proposition~\ref{prop:localdim} quantitative.

\begin{definition}
\label{def:basis}
For a nonempty set $S$ of agents and an input assignment $r\in X_S$, let $\bm e_{S,r}$ denote the moment vector with $E_S(x_S)=\one\{x_S=r\}$ and $E_T(x_T)=0$ for every $T\ne S$; that is, $\bm e_{S,r}$ is the standard basis vector along the coordinate $E_S(r)$.
\end{definition}

For example, for two agents with binary inputs the coordinates are $E_A(x)$, $E_B(y)$, and $E_{AB}(x,y)$, eight in total, and $\bm e_{AB,(1,0)}$ has
$E_{AB}(1,0)=1$ and all other coordinates equal to zero. In the corresponding behaviour each output is uniformly random; the two outputs always agree at inputs $(x,y)=(1,0)$ and are uncorrelated at every other input pair.

\begin{lemma}
\label{lem:l1}
For every nonempty $S$, every $r\in X_S$, and both signs, $\pm\bm e_{S,r}$ (Definition~\ref{def:basis}) is the moment vector of a local behaviour. Consequently, for every $K$,
\begin{equation}
 \{u\in\R^{D_0(K)}:\|u\|_1\le1\}\subseteq\CC_K ,
 \label{eq:ball}
\end{equation} 
where $\|u\|_1=\sum_j|u_j|$. Moreover, let $\bm c\in\R^{D_0(K)}$ and $\beta\in\R$ define a valid inequality for $\CC_K$, that is, $\bm c^\top u\le\beta$ for all $u\in\CC_K$. If $\bm c\ne0$, then $\beta\ge\|\bm c\|_\infty=\max_j|c_j|>0$.
\end{lemma}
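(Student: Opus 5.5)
The plan is to construct an explicit local model for each $\pm\bm e_{S,r}$ and then obtain the ball and the bound on $\beta$ by convexity and duality.

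\textbf{Construction.} Fix $S$, $r\in X_S$ and a sign $s\in\{\pm1\}$. Let the shared variable $\lambda=(b_i)_{i\in S}$ consist of independent uniform bits. Agent $j\notin S$ outputs a fresh uniform bit; since all randomness is absorbed into $\lambda$, $\lambda$ also carries these bits. For agent $i\in S$, choose one distinguished agent $i_0\in S$ and define the signed outputs
\[
A_i=\begin{cases}(-1)^{b_i}, & x_i\neq r_i,\\ (-1)^{b_i}\,\sigma_i, & x_i=r_i,\end{cases}
\]
where the $\sigma_i$ are fixed signs. This does not work directly: the product over $S$ would equal $\prod(-1)^{b_i}$ times some signs, which has zero mean everywhere.

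The correct device is the correlated uniform bits used in a GHZ-like model. Draw $b_i$ for $i\in S$ uniformly, subject to $\prod_{i\in S}(-1)^{b_i}=s$. Then every proper subset of these bits is independent and uniform, while the full parity is fixed to $s$. Agent $i\in S$ outputs $(-1)^{b_i}$ if $x_i=r_i$, and otherwise outputs an independent uniform bit $c_i$, also taken from $\lambda$.

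We then check the moments. Any $T$ containing an agent outside $S$, or containing some $i\in S$ with $x_i\neq r_i$, includes an independent uniform factor, so $E_T=0$. For $T\subsetneq S$ with all $x_i=r_i$, the relevant bits are independent and uniform, so again $E_T=0$. Finally, $E_S(x_S)=s\,\one\{x_S=r\}$. Each agent uses only its own input and $\lambda$, so the behaviour is local, and its moment vector is $s\,\bm e_{S,r}$. This step is the only real content of the lemma, and I expect choosing the parity-constrained $\lambda$ to be the main point.

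\textbf{Ball.} Projecting by $\Pi_K$ sends each $\pm\bm e_{S,r}$ with $|S|\le K$ to $\pm$ a standard basis vector of $\R^{D_0(K)}$. Each of these lies in $\CC_K$. The unit $\ell_1$ ball is the convex hull of $\pm$ the basis vectors, and $\CC_K$ is convex because it is a linear image of the polytope $\PC_0$. Hence the ball is contained in $\CC_K$, which is \eqref{eq:ball}.

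\textbf{Inequalities.} Suppose $\bm c\ne0$ and $\bm c^\top u\le\beta$ on $\CC_K$. Pick $j$ with $|c_j|=\|\bm c\|_\infty>0$ and evaluate at $u=\operatorname{sgn}(c_j)\,\bm e_j\in\CC_K$. This gives $\beta\ge|c_j|=\|\bm c\|_\infty$.
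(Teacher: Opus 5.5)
Your proposal is correct and follows the paper's proof essentially step for step. You use the same parity-constrained shared signs on $S$, with independent unbiased signs for mismatched inputs and for agents outside $S$; you carry out the same moment check; you obtain the $\ell_1$ ball by convexity (projecting the basis vectors before taking the hull rather than after, which is equivalent); and you get $\beta\ge\|\bm c\|_\infty$ by the same evaluation at $\operatorname{sign}(c_j)$ times a basis vector. In a final write-up, delete the abandoned first construction. Also add one line justifying that every proper subset of the parity-constrained signs is independent and uniform: either use the paper's simultaneous flip of one sign in $T$ and one in $S\setminus T$, or note that dropping a coordinate outside $T$ gives a bijection onto $\{\pm1\}^{|S|-1}$.
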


\begin{proof}
Fix $S$, $r$, and a sign $\tau\in\{\pm1\}$. Let the shared variable consist of signs $(\xi_i)_{i\in S}$, uniformly distributed on the set $\{\xi:\prod_{i\in S}\xi_i=\tau\}$, together with independent unbiased signs $\xi'_1,\ldots,\xi'_N$. Agent $i\in S$ outputs $\xi_i$ if $x_i=r_i$ and $\xi'_i$ otherwise, and agent $i\notin S$ outputs $\xi'_i$. Each output depends only on the agent's own input and the shared variable, so the behaviour is local.

Let $T\ne\emptyset$. If $T\not\subseteq S$, or if $x_i\ne r_i$ for some $i\in T$, the product of the outputs in $T$ contains a factor $\xi'_i$ that is unbiased and independent of the other factors, so $E_T=0$. If $T\subseteq S$ and $x_T=r_T$, then $E_T=\mathbb E\prod_{i\in T}\xi_i$, which equals $\tau$ for $T=S$. For $\emptyset\ne T\subsetneq S$, choose $i\in T$ and $j\in S\setminus T$; flipping $\xi_i$ and $\xi_j$ together preserves the distribution of $\xi$ and reverses the sign of $\prod_{k\in T}\xi_k$, so $E_T=0$. The moment vector is therefore $\tau\bm e_{S,r}$.

The vectors $\pm\bm e_{S,r}$ are the signed standard basis vectors of $\R^{D_0(N)}$, and their convex hull is the unit $\ell_1$ ball. Since the local polytope is convex, it contains this ball, and $\Pi_K$ maps it onto the unit $\ell_1$ ball of $\R^{D_0(K)}$, which proves Eq.~\eqref{eq:ball}. Finally, let $(S,r)$ with $|S|\le K$ index a coordinate at which $|c_{S,r}|=\|\bm c\|_\infty$. The point $u=\operatorname{sign}(c_{S,r})\,\Pi_K(\bm e_{S,r})$ lies in the ball and hence in $\CC_K$, and $\bm c^\top u=\|\bm c\|_\infty$, so $\beta\ge\|\bm c\|_\infty$.
\end{proof}

\begin{table}[tb]
\centering
\caption{Principal notation.}
\label{tab:notation}
\small
\begin{tabular}{@{}p{2.4cm}p{5.6cm}@{}}
\toprule
Symbol & Meaning\\
\midrule
$x_i\in X_i$, $d_i$ & Input of agent $i$; alphabet; $d_i=|X_i|$\\
$a_i$, $A_i$ & Output; signed output $(-1)^{a_i}$\\
$P(a\mid x)$, $\U$ & Behaviour; uniform behaviour\\
$\lambda$, $\mu$ & Shared randomness; its distribution or convex weights\\
$E_S(x)$ & Moment of the agents in $S$, Eq.~\eqref{eq:moment}\\
$D_R(K)$ & Number of moment coordinates, Eq.~\eqref{eq:count}\\
$\Pi_K$, $\CC_K$ & Projection onto moments of order $\le K$; $\CC_K=\Pi_K(\PC_0)$\\ $\mathcal A$, $\PC_{\mathcal A}$, $\PC_0$ & Schedule; its class; local class\\ $R_{\mathcal A}(S)$ & Inputs that can reach the outputs in $S$\\ $\bm m$; $(\bm c,\beta)$; $\bm w$ & Target moments; separating inequality; equality certificate\\
\bottomrule
\end{tabular}
\end{table}

\section{Information flow, affine hulls, and certificates}
\label{sec:affine}

\paragraph{A two-agent example.}
Take binary inputs $x$ for Alice and $y$ for Bob. Alice sends one message $m_1$, a function of $x$, to Bob before he responds, and Alice receives nothing. Conditional on $\lambda$, Alice's output is a function of $x$, while Bob's is a function of $y$ and $m_1$. Hence $E_A$ cannot depend on $y$, whereas $E_B$ and $E_{AB}$ may depend on both inputs. Of the $3\times4=12$ moments, the two equalities
\begin{equation}
 E_A(x,1)-E_A(x,0)=0,\qquad x\in\{0,1\},
 \label{eq:example}
\end{equation}
are forced, leaving $10$ free coordinates, and Theorem~\ref{thm:affine} below shows that there are no further affine constraints. The target in which Alice outputs $y$ and Bob outputs $0$ has $E_A(0,1)-E_A(0,0)=-2$ and so violates Eq.~\eqref{eq:example} at $x=0$. The violation identifies the statistic (Alice's marginal) and the input ($y$) that the schedule does not deliver to Alice; a separating inequality would also exclude this target but need not point to $y$. In this example the equalities together with positivity describe the class exactly (Proposition~\ref{prop:two}); this is not true in general, as shown at the end of this section.

\begin{definition}
\label{def:arch}
A schedule $\mathcal A$ fixes a finite sequence of message events, each with a sender, a set of recipients, and a finite alphabet, together with the time at which each agent's output is produced. Inputs arrive before any message. Agents remember their input and everything they have received, and may apply any deterministic function to this history. Shared randomness is independent of the inputs. The class $\PC_{\mathcal A}$ is the convex hull of the deterministic behaviours that the schedule can produce.
\end{definition}

Definition~\ref{def:arch} specifies the schedule but leaves implicit several properties of the protocol family on which the main result depends. We state them separately, because each enters a specific step of the proof of Theorem~\ref{thm:affine} below and each corresponds to a modelling choice that one may wish to vary.
\begin{itemize}
\item[(i)] Outputs are final: no message uses a produced output as an argument, so flipping any agent's output, with all messages unchanged, gives another protocol of the same schedule. This is used to isolate individual moments.
\item[(ii)] Reachability is structural: which inputs can reach which outputs is read off the schedule, not off the functions a particular protocol uses, and a protocol may ignore what it receives. This makes the equalities depend on the schedule alone.
\item[(iii)] Messages are free: each message may be any function of the sender's history, and each alphabet has at least two noiseless symbols that the sender can choose between. This is used to transmit the one-bit flags in the spanning argument.
\item[(iv)] Timing and routing are fixed: who speaks, to whom, and when do not depend on the inputs, directly or through earlier messages, so the presence or absence of a message carries no information.
\end{itemize}
All four properties hold for the schedules of Sec.~\ref{sec:comm}. They concern the affine hull only; two schedules with the same reachability have the same affine hull but can have different polytopes.

Each schedule $\mathcal A$ determines a directed acyclic graph, its time-ordered information graph, with one node for each input, message event, and output, and with edges representing memory and message delivery (Fig.~\ref{fig:reach}). Input $x_j$ reaches output $i$ if this graph contains a directed path from $x_j$ to output $i$. We denote by $R_{\mathcal A}(\{i\})$ the set of such $j$ and set $R_{\mathcal A}(S)=\bigcup_{i\in S}R_{\mathcal A}(\{i\})$. A message that arrives after an agent's output has been fixed does not reach that output. We write $D_{\mathcal A}(K)$ for Eq.~\eqref{eq:count} with $R=R_{\mathcal A}$.

For example, consider three agents with inputs $x$, $y$, $z$ and the relay schedule of Fig.~\ref{fig:reach}: Alice sends $m_1(x)$ to Bob, Bob then sends $m_2(y,m_1)$ to Charlie, and all outputs are produced after these messages. Labelling inputs by name, Alice's output is reached only by $x$, Bob's by $\{x,y\}$ through $m_1$, and Charlie's by $\{x,y,z\}$ through the chain $x\to m_1\to m_2$. Reach sets of groups are unions, so $R_{\mathcal A}(\{A,C\})=\{x,y,z\}$: the joint statistic $E_{AC}$ may depend on all three inputs, although $E_A$ may depend only on $x$. Timing matters: if Bob produced his output before $m_1$ arrived, his reach would shrink to $\{y\}$. Summing $2^{|R(S)|}$ over the seven sets $S$ gives $D_{\mathcal A}(3)=2+4+8+4+8+8+8=42$, the value listed for this schedule in Table~\ref{tab:reach}.

\begin{figure}[tb]
\centering
\caption{Time-unrolled two-bit relay of Sec.~\ref{sec:three}, with $m_1=m_1(x)$ and $m_2=m_2(y,m_1)$. Grey lines are memory and red arrows are messages. Each output's reach is shown on the right; the reach of a set of agents is the union, so $E_{AC}$ may depend on $\{x,y,z\}$ while $E_A$ may depend only on $x$.}
\begin{tikzpicture}[font=\small]
 \node[box,minimum width=0.6cm] (ix) at (0,0) {$x$};
 \node[box,minimum width=0.6cm] (iy) at (0,-1.15) {$y$};
 \node[box,minimum width=0.6cm] (iz) at (0,-2.3) {$z$};
 \node[box,minimum width=0.6cm,draw=flowgreen,fill=flowgreen!5] (oa) at (4.4,0) {$a$};
 \node[box,minimum width=0.6cm,draw=flowgreen,fill=flowgreen!5] (ob) at (4.4,-1.15) {$b$};
 \node[box,minimum width=0.6cm,draw=flowgreen,fill=flowgreen!5] (oc) at (4.4,-2.3) {$c$};
 \node[anchor=east,font=\footnotesize] at (-0.4,0) {$A$};
 \node[anchor=east,font=\footnotesize] at (-0.4,-1.15) {$B$};
 \node[anchor=east,font=\footnotesize] at (-0.4,-2.3) {$C$};
 \draw[thick,flowgray] (ix.east)--(oa.west);
 \draw[thick,flowgray] (iy.east)--(ob.west);
 \draw[thick,flowgray] (iz.east)--(oc.west);
 \draw[arr,draw=flowred] (1.0,0)--node[right,font=\footnotesize]{$m_1$}(1.5,-1.15);
 \draw[arr,draw=flowred] (2.5,-1.15)--node[right,font=\footnotesize]{$m_2$}(3.0,-2.3);
 \node[anchor=west,font=\footnotesize] at (4.8,0) {$\{x\}$};
 \node[anchor=west,font=\footnotesize] at (4.8,-1.15) {$\{x,y\}$};
 \node[anchor=west,font=\footnotesize] at (4.8,-2.3) {$\{x,y,z\}$};
 \draw[-{Stealth[length=2mm]},flowgray] (0.4,-2.95)--(4.0,-2.95)
   node[midway,below,font=\footnotesize]{time};
\end{tikzpicture}
\label{fig:reach}
\end{figure}

\begin{theorem}
\label{thm:affine}
Let $\mathcal A$ satisfy Definition~\ref{def:arch} and assumptions (i)--(iv). Every behaviour in $\PC_{\mathcal A}$ satisfies
\begin{equation}
 E_S(x)=E_S(x')\quad\text{if}\quad
 x_{R_{\mathcal A}(S)}=x'_{R_{\mathcal A}(S)},
 \label{eq:nodep}
\end{equation}
and these are the only affine constraints. In moment coordinates, the affine hull of $\PC_{\mathcal A}$ is the set of moment vectors in which each $E_S$ is an arbitrary function of $x_{R_{\mathcal A}(S)}$, and
\begin{equation}
 \dim\PC_{\mathcal A}=D_{\mathcal A}(N)
 =\sum_{S\ne\emptyset}\ \prod_{j\in R_{\mathcal A}(S)}d_j .
 \label{eq:archdim}
\end{equation}
Points of this affine hull need not be valid probability distributions.
\end{theorem}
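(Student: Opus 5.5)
The argument has two parts: necessity of the equalities \eqref{eq:nodep}, and a spanning argument showing that the deterministic protocols of $\mathcal A$ affinely span every moment vector in which each $E_S$ is a function of $x_{R_{\mathcal A}(S)}$. The dimension formula \eqref{eq:archdim} then follows by counting coordinates as in Eq.~\eqref{eq:count}.

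\emph{Necessity.} Fix a deterministic protocol. I will proceed by induction along the time order of the information graph. Each message is a function of the sender's input and of earlier messages it received. By (iv) routing does not depend on inputs, so each message is a function of the inputs that reach it. Hence output $a_i$ is a deterministic function of $x_{R_{\mathcal A}(\{i\})}$, and $\prod_{i\in S}A_i$ is a function of $x_{R_{\mathcal A}(S)}$. Averaging over $\lambda$, which is independent of $x$, preserves this property. The equalities are linear, so they hold on all of $\PC_{\mathcal A}$. By (ii) they depend only on the schedule.

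\emph{Sufficiency.} I will show that for every $S$ and every $r\in X_{R(S)}$ (writing $R=R_{\mathcal A}$), the basis vector $\bm e_{S,r}$ lies in the linear span of differences of points of $\PC_{\mathcal A}$. Here $\bm e_{S,r}$ is defined as in Definition~\ref{def:basis}, but with $E_S$ a function of $x_{R(S)}$. The uniform behaviour $\U$ lies in $\PC_{\mathcal A}$, since it is already local, so it suffices to show $\U\pm\epsilon\,\bm e_{S,r}\in\PC_{\mathcal A}$ for small $\epsilon$, or at least that the $\bm e_{S,r}$ lie in the span of $\PC_{\mathcal A}-\U$. I will construct a randomized protocol as follows.
\begin{enumerate}
\item For each $j\in R(S)$, pick a path in the information graph from $x_j$ to some output $i\in S$. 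Along this path, use (iii) to transmit a one-bit flag $f_j=\one\{x_j=r_j\}$. When several flags share a message event, they are combined into one bit by the running AND: a relay forwards $1$ only if all flags it has received so far equal $1$, together with its own flag if relevant. Since ANDs compose along a tree, each agent $i\in S$ can compute $g_i=\prod_{j\in R_i'}f_j$ for some sets $R'_i$ whose union is $R(S)$.
\item As in Lemma~\ref{lem:l1}, let the shared variable contain signs $(\xi_i)_{i\in S}$ uniform subject to $\prod\xi_i=\tau$, together with independent unbiased signs $\xi'_i$. Agent $i\in S$ outputs $\xi_i$ if $g_i=1$ and $\xi'_i$ otherwise. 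Agents outside $S$ output $\xi'_i$. This uses (i): outputs feed no messages, so these output choices do not disturb the flags.
\end{enumerate}
The moment computation from Lemma~\ref{lem:l1} then gives $E_S=\tau\prod_{i\in S}g_i=\tau\one\{x_{R(S)}=r\}$ and $E_T=0$ for all $T\ne S$. For $T\subsetneq S$ this holds because of the paired flip argument; for $T\not\subseteq S$ an independent $\xi'$ factor appears. Thus $\pm\bm e_{S,r}\in\PC_{\mathcal A}$ directly. The convex hull of these vectors is an $\ell_1$ ball of full dimension $D_{\mathcal A}(N)$ in the subspace cut out by \eqref{eq:nodep}, which proves both the affine-hull claim and the dimension formula.

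\emph{Main obstacle.} The delicate step is the flag routing. An agent $i\in S$ may also reach inputs outside $R(S)$, but that causes no problem because the flags are defined only for $j\in R(S)$. The real issue is that the intermediate agents forwarding AND-bits must have received their upstream bits before they send, so that the binary message alphabet suffices. I expect to handle this by choosing each path as a time-respecting path in the DAG and observing that AND-aggregation at each message event needs only one bit, because every event on the path carries a single bit. If some agent $i\in S$ has no path, that agent simply takes $R'_i=\emptyset$ and sets $g_i=1$. The last sentence of the theorem needs no proof: the affine hull contains the $\ell_1$ ball scaled arbitrarily, so it includes points that are not probability distributions.
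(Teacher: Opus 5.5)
\textbf{Verdict: correct, but the converse follows a different route from the paper.} Your necessity half matches the paper's.

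\textbf{The paper's converse.} It first uses assumption~(i) to average over all output flips. This puts $\U$, with moment vector $0$, in the class, so the affine hull becomes a linear span. It then isolates each $S$-sector with the signed flip average $L_S$ of Eq.~\eqref{eq:sector}. Finally it obtains the indicator $\prod_{i\in S}g_i=\one\{x_{R_{\mathcal A}(S)}=r\}$ from the product expansion \eqref{eq:flagspan}. That expansion combines the $S$-moments of $2^{|S|}$ deterministic flag protocols, in each of which the agents of some $T\subseteq S$ output $2g_i-1$. This places $\bm e_{S,r}$ only in the linear span of the class.

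\textbf{Your converse.} You transplant the parity-constrained shared signs of Lemma~\ref{lem:l1} onto the flag protocol. An agent of $S$ whose flag is off outputs a fresh unbiased $\xi'_i$, which kills every moment containing that agent. So a single mixture of deterministic protocols has moment vector exactly $\tau\bm e_{S,r}$, and you need neither signed combinations nor the sector projection.

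\textbf{What each route buys.} Yours gives a strictly stronger, quantitative conclusion. The unit $\ell_1$ ball of the reduced coordinates lies in $\PC_{\mathcal A}$ itself, so Lemma~\ref{lem:l1} extends from $\PC_0$ to every declared schedule. Consequently $\U$ is a relative interior point of $\PC_{\mathcal A}$, and every valid inequality $\bm c^\top u\le\beta$ with $\bm c\ne0$ in those coordinates has $\beta\ge\|\bm c\|_\infty$. The paper's route needs only deterministic protocols with trivial outputs, and it makes explicit that flip invariance, assumption~(i), is what separates the moment orders.

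\textbf{Your ``main obstacle'' paragraph.} The issues you raise there are not real obstacles:
\begin{enumerate}
\item For $i\in S$ one has $R_{\mathcal A}(\{i\})\subseteq R_{\mathcal A}(S)$ by definition of the union, so no agent of $S$ reaches inputs outside $R_{\mathcal A}(S)$.
\item Every agent's own input reaches its output through the memory edge, so no agent of $S$ lacks a path.
\item Directed paths in the time-ordered graph are automatically time-respecting.
\item Path selection can be dropped, as in the paper. Set $h_j=1$ for $j\notin R_{\mathcal A}(S)$ and let every sender transmit the AND of its own flag and everything it has received. Induction over the schedule then gives $g_i=\prod_{j\in R_{\mathcal A}(\{i\})}h_j$ directly.
\end{enumerate}
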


\begin{proof}
Conditional on $\lambda$, each output depends only on its ancestors, so the product over $S$ depends only on $x_{R_{\mathcal A}(S)}$; input-independent mixtures preserve Eq.~\eqref{eq:nodep}.

For the converse, flipping output signs preserves the schedule by assumption~(i). Averaging over all flips gives $\U$, whose moment vector is zero, so the affine hull is the linear span of the attainable moment vectors. If $F_\epsilon$ flips outputs according to $\epsilon\in\{\pm1\}^N$, it multiplies $E_T$ by $\prod_{i\in T}\epsilon_i$, and the signed average
\begin{equation}
 L_S=2^{-N}\sum_\epsilon\Bigl(\prod_{i\in S}\epsilon_i\Bigr)F_\epsilon
 \label{eq:sector}
\end{equation}
keeps the moments $E_S$ and removes all others by character orthogonality, so each $S$ can be treated separately. Fix $S$ and an assignment $r$ of the inputs in $R_{\mathcal A}(S)$. Give input $j$ the flag $h_j=\one\{x_j=r_j\}$ if $j\in R_{\mathcal A}(S)$ and $h_j=1$ otherwise. At each message event the sender transmits the AND of its own flag and all flags received so far, which assumption~(iii) permits. By induction over the schedule, each message is the AND of the flags of its ancestral inputs, and binary symbols suffice even for broadcasts. At output $i\in S$, let $g_i$ be the AND of its own and received flags, so $g_i=\prod_{j\in R_{\mathcal A}(\{i\})}h_j$ and
\begin{equation}
 \prod_{i\in S}g_i=\one\{x_{R_{\mathcal A}(S)}=r\}.
 \label{eq:indicator}
\end{equation}
For each $T\subseteq S$, let the agents in $T$ output the sign $2g_i-1$ and all others $+1$; the resulting $S$-moment is $\prod_{i\in T}(2g_i-1)$, and
\begin{equation}
 \prod_{i\in S}g_i=2^{-|S|}\sum_{T\subseteq S}\prod_{i\in T}(2g_i-1)
 \label{eq:flagspan}
\end{equation}
places each indicator~\eqref{eq:indicator} in the span, from which $L_S$ isolates it. These indicators span all functions of $x_{R_{\mathcal A}(S)}$, and summing over $S$ gives Eq.~\eqref{eq:archdim}.
\end{proof}
\begin{remark}
\label{rem:onebit}
One might expect a narrow channel to narrow the affine hull, since a single binary message cannot transmit an input alphabet with more than two symbols. Theorem~\ref{thm:affine} shows otherwise: one binary symbol per message event already gives each statistic its full reach. The spanning argument uses many protocols, each transmitting only the one-bit flag for a single assignment $r$, and the resulting indicators span all functions of $x_{R_{\mathcal A}(S)}$ although no single protocol carries the whole input. Capacity therefore restricts the polytope but not its affine hull, and schedules with equal reachability but different alphabets can simulate different behaviours. The ternary example below and the two-bit relay of Sec.~\ref{sec:three} are instances.
\end{remark}
\begin{corollary}
\label{cor:eqbasis}
Fix $S$, write $R=R_{\mathcal A}(S)$ and $x=(x_R,x_{R^c})$, and choose a reference value $x^\circ_{R^c}$ of the unreachable inputs. The equalities
\begin{equation}
 E_S(x_R,x_{R^c})=E_S(x_R,x^\circ_{R^c}),\quad x_{R^c}\ne x^\circ_{R^c},
\label{eq:eqbasis}
\end{equation}
over all $S$ and $x_R$, are independent and complete, and there are $D_{\mathrm{full}}-D_{\mathcal A}(N)$ of them. Keeping only moments of order at most $K$, the image of $\PC_{\mathcal A}$ has dimension $D_{\mathcal A}(K)$.
\end{corollary}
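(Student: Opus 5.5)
The corollary follows from Theorem~\ref{thm:affine} by linear algebra in the coordinates $E_S(x)$, $S\ne\emptyset$, $x\in X$, which are the $D_{\mathrm{full}}$ coordinates of the space of all behaviours. Fix $S$ and write $V_S$ for the space of all functions of $x$ at this $S$, of dimension $\prod_j d_j$. Theorem~\ref{thm:affine} says that the affine hull of $\PC_{\mathcal A}$ is $\bigoplus_S W_S$, where $W_S\subseteq V_S$ is the subspace of functions depending only on $x_R$ with $R=R_{\mathcal A}(S)$. It is a linear subspace rather than a translate, because the theorem shows that the hull contains the zero vector of $\U$. The task therefore splits over $S$. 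For each $S$ I will show that the equalities~\eqref{eq:eqbasis} with this $S$ cut out exactly $W_S$ inside $V_S$ and are linearly independent.

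\emph{Completeness.} If $f\in V_S$ satisfies~\eqref{eq:eqbasis} for every $x_R$, then $f(x_R,x_{R^c})=f(x_R,x^\circ_{R^c})$ for every $x$. Hence $f$ is a function of $x_R$ alone, so $f\in W_S$. Conversely, every element of $W_S$ satisfies the equalities. The solution set is therefore exactly $W_S$, and by Theorem~\ref{thm:affine} every affine constraint on $\PC_{\mathcal A}$ follows from these equalities.

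\emph{Independence.} Each equality indexed by $(S,x_R,x_{R^c})$ with $x_{R^c}\ne x^\circ_{R^c}$ involves the coordinate $E_S(x_R,x_{R^c})$. No other equality in the family involves this coordinate: the reference coordinates $E_S(x_R,x^\circ_{R^c})$ never appear as non-reference terms, and distinct index triples name distinct non-reference coordinates. The coefficient matrix therefore contains an identity block on these pivot coordinates and has full row rank. Counting gives $\prod_{j\in R}d_j\bigl(\prod_{j\notin R}d_j-1\bigr)=\prod_j d_j-\prod_{j\in R}d_j$ equalities for each $S$. Summing over $S$ yields $D_{\mathrm{full}}-D_{\mathcal A}(N)$, which agrees with~\eqref{eq:archdim}, since the codimension equals the number of independent equalities.

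\emph{Projection.} The projection $\Pi_K$ is a coordinate projection that keeps the blocks $V_S$ with $|S|\le K$. Because the affine hull is the direct sum $\bigoplus_S W_S$, its image is $\bigoplus_{|S|\le K}W_S$, of dimension $\sum_{|S|\le K}\prod_{j\in R(S)}d_j=D_{\mathcal A}(K)$. The dimension of a convex set equals that of its affine hull, and the affine hull of a linear image is the image of the affine hull, so the image of $\PC_{\mathcal A}$ has dimension $D_{\mathcal A}(K)$.

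The only step that needs care is this last one. It is the product structure over $S$, obtained from the sector projections $L_S$ in the proof of Theorem~\ref{thm:affine}, that makes the projected dimension a sum over the retained $S$. Without that structure, a coordinate projection of a subspace could lose dimension in a way that is not simply a count of retained blocks.
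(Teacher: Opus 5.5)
Your proposal is correct and follows essentially the same route as the paper. Independence comes from each equality containing its own non-reference coordinate, the count matches the codimension in Theorem~\ref{thm:affine}, and the projected dimension follows from the block structure of the affine hull over $S$. The one difference is minor: you prove completeness directly by showing that the equalities cut out exactly $W_S$. The paper instead infers completeness from independence together with the count equalling the codimension.
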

\begin{proof}
Each equality involves a distinct non-reference coordinate, so the equalities are independent; their number equals the codimension in Theorem~\ref{thm:affine}, so they are complete. Projection onto the moments with $|S|\le K$ maps the affine hull onto the sum of the retained function spaces.
\end{proof}

For the local schedule $R(S)=S$, and Eq.~\eqref{eq:archdim} reproduces $D_0(N)$. If every input reaches every output, the dimension is $D_{\mathrm{full}}$, which does not imply that every behaviour is achievable. In the example above, $R(\{A\})=\{x\}$ and $R(\{B\})=R(\{A,B\})=\{x,y\}$, giving $2+4+4=10$, and Eq.~\eqref{eq:eqbasis} with reference $y=0$ is Eq.~\eqref{eq:example}.

Let $x,y\in\{0,1,2\}$ and allow one message from Alice to Bob. Binary and ternary
message alphabets both give dimension $3+9+9=21$. Nevertheless, the target
\begin{equation}
 a=0,\qquad b=\one\{x=y\}
 \label{eq:capacity}
\end{equation}
requires three message symbols for exact simulation. With two symbols, two values of $x$ share a message, so Bob's response rows for these values coincide, whereas the target rows differ; a mixture that reproduces a deterministic behaviour must reproduce it in every positive-weight component, so shared randomness does not help. This is a one-way instance of the equality function, a standard example in communication complexity~\cite{KN1997}.

\subsection{Convex certificates}
\label{sec:cert}

Choose a coordinate map and let $u_\lambda$ be the image of each distinct
deterministic behaviour of a declared schedule. A target with moment vector
$\bm m$ belongs to the class exactly when
\begin{equation}
 \begin{aligned}
 \bm m&=\sum_\lambda\mu_\lambda u_\lambda,\\
 \sum_\lambda\mu_\lambda&=1,\qquad\mu_\lambda\ge0.
 \end{aligned}
 \label{eq:membership}
\end{equation}
The weights give a simulation in which shared randomness selects protocol
$\lambda$ with probability $\mu_\lambda$; in projected coordinates they certify
only the retained statistics. Otherwise, linear-programming duality gives a
separating inequality~\cite{Brunner2014},
\begin{equation}
 \bm c^\top u_\lambda\le\beta\quad\forall\lambda,\qquad
 \bm c^\top\bm m>\beta.
 \label{eq:separation}
\end{equation}
An equality of Corollary~\ref{cor:eqbasis}, written $\bm w^\top u_\lambda=0$ for
all $\lambda$, that the target violates is a separator with $\beta=0$ after
choosing the sign of $\bm w$; its contexts identify a dependence on an
unreachable input. Figure~\ref{fig:flow} summarizes the procedure.

\begin{figure}[tb]
\centering
\caption{Certification procedure. Passing the affine equalities permits a convex membership test but does not imply simulation. A decomposition in complete coordinates certifies the full behaviour; after projection it certifies only the retained moments.}
\begin{tikzpicture}
 \node[box,text width=7.0cm] (start) at (0,0)
 {Declare inputs, schedule, alphabets, and retained moments};
 \node[box,text width=2.8cm] (check) at (-1.9,-1.5)
 {Compute reachability\\Test equalities};
 \node[box,text width=2.8cm,draw=flowred,fill=flowred!5] (fail) at (1.9,-1.5)
 {Missing input dependence\\Equality certificate};
 \node[box,text width=2.8cm] (lp) at (-1.9,-3.05)
 {Solve convex membership};
 \node[box,text width=2.8cm,draw=flowgreen,fill=flowgreen!5] (yes) at (-1.9,-4.95)
 {Feasible weights\\Simulation of retained statistics};
 \node[box,text width=2.8cm,draw=flowred,fill=flowred!5] (no) at (1.9,-4.95)
 {Infeasible weights\\Separating inequality};
 \draw[arr] (start.south)--(check.north);
 \draw[arr] (check)--node[above,font=\footnotesize] {fail}(fail);
 \draw[arr] (check)--node[right,font=\footnotesize] {pass}(lp);
 \draw[arr] (lp)--(yes);
 \draw[arr] (lp.south)--++(0,-0.38)-|(no.north);
\end{tikzpicture}
\label{fig:flow}
\end{figure}

For a noisy interpolation $\bm m_\eta=(1-\eta)\bm m+\eta\,\bm m_{\mathrm{ref}}$ with $\bm c^\top\bm m_{\mathrm{ref}}\le\beta$, the witness remains violated exactly for $0\le\eta<\eta^\star$, where
\begin{equation}
 \eta^\star=\frac{\bm c^\top\bm m-\beta}
 {\bm c^\top\bm m-\bm c^\top\bm m_{\mathrm{ref}}}.
 \label{eq:noise}
\end{equation}
This is the robustness of that certificate, not necessarily the exact membership
threshold; an equality violation diluted toward an equality-compatible reference
remains nonzero for every $\eta<1$.

For estimated moments $\widehat{\bm m}$ with estimated covariance
$\widehat\Sigma$, a standardized margin is
\begin{equation}
 \zeta=\frac{\bm c^\top\widehat{\bm m}-\beta}
 {\sqrt{\bm c^\top\widehat\Sigma\bm c}},\qquad
 \bm c^\top\widehat\Sigma\bm c>0 .
 \label{eq:z}
\end{equation}
It is defined only when the estimated variance is positive and is a signal-to-noise ratio rather than a calibrated significance level. For a mean $\widehat E$ of $n$ independent signed products at a fixed context, Hoeffding's inequality~\cite{Hoeffding1963} gives $\Pr\{|\widehat E-E|\ge\varepsilon\}\le2\exp(-n\varepsilon^2/2)$, so for two contexts with equal true moments a union bound gives, with probability at least $1-\alpha$,
\begin{equation}
 |\widehat E-\widehat E'|
 \le\sqrt{\frac{2\log(4/\alpha)}n}
 +\sqrt{\frac{2\log(4/\alpha)}{n'}}.
 \label{eq:eqconfidence}
\end{equation}
Testing many equalities, or choosing a witness on the same data, requires multiplicity control or independent validation data, and serially dependent trials require a different analysis. Positive rescaling of $(\bm c,\beta)$ leaves $\zeta$ and $\eta^\star$ unchanged. Minimizing $\|\bm c\|_1$ at fixed margin is one way to obtain sparser witnesses, although it does not guarantee fewer settings or a larger $\zeta$.

\section{Communication hierarchies}
\label{sec:comm}

All inputs and outputs in this section are binary. A ``round'' is one prescribed message event, not a simultaneous distributed round, and every level can ignore any message, so the levels are nested.

\subsection{Canonical targets}
\label{sec:targets}

For two agents, define
\begin{equation}
 \mathcal B=E_{AB}(00)+E_{AB}(01)+E_{AB}(10)-E_{AB}(11).
 \label{eq:chsh}
\end{equation}
Local behaviours satisfy $|\mathcal B|\le2$~\cite{CHSH1969}, and in this scenario the eight relabelled CHSH inequalities together with positivity characterize the local polytope~\cite{Fine1982,Brunner2014}. The PR behaviour $P_{\mathrm{PR}}(ab\mid xy)=\tfrac12\one\{a\oplus b=xy\}$~\cite{PR1994} has zero marginals, $E_{AB}(xy)=(-1)^{xy}$, and $\mathcal B=4$, above the quantum bound $2\sqrt2$~\cite{Cirelson1980}. The family $P_{\mathrm{PR},v}=vP_{\mathrm{PR}}+(1-v)\U$ has correlators $v(-1)^{xy}$ and largest CHSH value $4v$, so it is local exactly for $v\le1/2$. It is quantum-realizable exactly for $v\le1/\sqrt2$, being the mixture $\sqrt2v\,P_{\mathrm{Ts}}+(1-\sqrt2v)\U$ for such $v$, where the Tsirelson target $P_{\mathrm{Ts}}$ has zero marginals and $E_{AB}(xy)=(-1)^{xy}/\sqrt2$. The latter is realized on $|\Phi^+\rangle=(|00\rangle+|11\rangle)/\sqrt2$ with Alice
measuring $\sigma_z$ or $\sigma_x$ and Bob measuring
$(\sigma_z\pm\sigma_x)/\sqrt2$.

Figure~\ref{fig:chsh} shows the projection onto $\mathcal B$ and $\mathcal B'=E_{AB}(00)+E_{AB}(01)-E_{AB}(10)+E_{AB}(11)$. The projections of the local, quantum, and no-signalling sets are
\begin{align}
 |\mathcal B|,|\mathcal B'|&\le2, \nonumber\\
 \mathcal B^2+\mathcal B'^2&\le8,\nonumber\\
 |\mathcal B\pm\mathcal B'|&\le4.
 \label{eq:images}
\end{align}
At a local vertex with signed tables $A(x)$, $B(y)$, either $B(0)=B(1)$ and $\mathcal B=\mathcal B'=2A(0)B(0)$, or $B(0)=-B(1)$ and $\mathcal B=-\mathcal B'=2A(1)B(0)$, so the local image is the square with corners $(\pm2,\pm2)$. With $U=E_{00}+E_{01}$ and $V=E_{10}-E_{11}$, the no-signalling image is $|U|,|V|\le2$, attained by zero-marginal correlators. The unit-vector representation $E_{xy}=\bm s_x\cdot\bm t_y$ of quantum correlators~\cite{Cirelson1980} gives $U^2+V^2\le\|\bm t_0+\bm t_1\|^2+\|\bm t_0-\bm t_1\|^2=4$, with equality attained by coplanar Pauli measurements on a maximally entangled pair; convex mixing fills the disk. Membership in a projected set does not establish membership of the unprojected behaviour.

\begin{figure}[tb]
\centering
\caption{Images in the $(\mathcal B,\mathcal B')$ plane: the local square, quantum disk, and no-signalling diamond. The dashed line is the witness $\mathcal B=2$. The PR, Tsirelson, and noisy-PR points lie on $\mathcal B'=0$.}
\begin{tikzpicture}[x=0.72cm,y=0.72cm,font=\small]
 \fill[flowgray!9] (4,0)--(0,4)--(-4,0)--(0,-4)--cycle;
 \fill[flowblue!14] (0,0) circle[radius=2.828427];
 \fill[flowgreen!18] (-2,-2) rectangle (2,2);
 \draw[flowgray,thick] (4,0)--(0,4)--(-4,0)--(0,-4)--cycle;
 \draw[flowblue,thick] (0,0) circle[radius=2.828427];
 \draw[flowgreen,thick] (-2,-2) rectangle (2,2);
 \draw[->] (-4.45,0)--(4.6,0) node[right] {$\mathcal B$};
 \draw[->] (0,-4.3)--(0,4.5) node[above] {$\mathcal B'$};
 \draw[flowred,dashed,thick] (2,-3.9)--(2,3.9);
 \node[flowred,anchor=west] at (2.06,3.7) {$\mathcal B=2$};
 \foreach \i in {-4,-2,2,4}{\draw(\i,0.07)--(\i,-0.07);
  \node[below,font=\footnotesize] at (\i,-0.05){$\i$};}
 \foreach \i in {-4,-2,2,4}{\draw(0.07,\i)--(-0.07,\i);}
 \fill[flowred] (4,0) circle[radius=0.075];
 \node[flowred,anchor=south east] at (4.8,0.1) {PR};
 \fill[flowblue] (2.828427,0) circle[radius=0.075];
 \draw[flowblue] (2.828427,0.12)--(3.55,1.25);
 \node[flowblue,align=center,anchor=south] at (3.5,1.30)
 {Tsirelson\\$(2\sqrt2,0)$};
 \fill[flowgreen] (2,0) circle[radius=0.075];
 \node[flowgreen,anchor=north west] at (0.05,-0.52) {$v=1/2$};
 \node[flowgreen] at (-1.1,-1.45) {Local};
 \node[flowblue] at (-2.95,2.27) {Quantum};
 \node[flowgray] at (-1.75,3.9) {No-signalling};
\end{tikzpicture}
\label{fig:chsh}
\end{figure}

For three agents, Mermin's expression is
\begin{equation}
 \begin{aligned}
 \mathcal M={}&E_{ABC}(000)-E_{ABC}(011)\\
 &-E_{ABC}(101)-E_{ABC}(110),
 \end{aligned}
 \label{eq:mermin}
\end{equation}
and local behaviours satisfy $|\mathcal M|\le2$~\cite{Mermin1990}. The state $|\mathrm{GHZ}_3\rangle=(|000\rangle+|111\rangle)/\sqrt2$ \cite{GHSZ1990}, with input $0$ denoting $\sigma_x$ and input $1$ denoting $\sigma_y$, has vanishing one- and two-body moments and
\begin{equation}
 E_{ABC}(x,y,z)=\cos\bigl[\tfrac\pi2(x+y+z)\bigr],
 \label{eq:ghz3}
\end{equation}
so $\mathcal M=4$. The family $P_{\mathrm{GHZ},v}=vP_{\mathrm{GHZ}}+(1-v)\U$ is local exactly for $v\le1/2$. Necessity follows from $\mathcal M(P_{\mathrm{GHZ},v})=4v$. For sufficiency, the weights $\mu_\lambda=[1+\mathcal M(\lambda)/2]/64$ on the $64$ local vertices are nonnegative by Mermin's bound and sum to one because $\mathcal M$ averages to zero; since distinct monomials in the signed responses are orthogonal under the uniform distribution on vertices, the mixture has half of each nonzero GHZ moment and zero for every other moment.

\subsection{Two agents}
\label{sec:two}

Let $\PC^{(2)}_k$ be the two-agent class with Alice speaking first:
\begin{align}
 k=0:&\quad a=a(x),\quad b=b(y),\nonumber\\
 k=1:&\quad m_1=m_1(x),\quad a=a(x),\quad b=b(y,m_1),\nonumber\\
 k=2:&\quad m_2=m_2(y,m_1),\nonumber\\
     &\quad a=a(x,m_2),\quad b=b(y,m_1).
 \label{eq:two}
\end{align}
We retain $E_A$, $E_B$, and $E_{AB}$ at all four contexts, giving twelve
coordinates without a no-signalling assumption.

\begin{proposition}
\label{prop:two}
The hierarchy is strict, $\PC^{(2)}_0\subsetneq\PC^{(2)}_1\subsetneq\PC^{(2)}_2$, with $16$, $64$, and $256$ vertices and affine dimensions $8$, $10$, and $12$. The one-message class consists exactly of the behaviours whose Alice marginal is independent of $y$; the two-message class is the full normalized behaviour polytope.
\end{proposition}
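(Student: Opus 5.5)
We take the claims in order: vertex counts, dimensions, the exact descriptions of $\PC^{(2)}_1$ and $\PC^{(2)}_2$, and finally strictness.

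\textbf{Vertex counts.} We count distinct deterministic behaviours. At $k=0$ the response tables $a(x)$, $b(y)$ give $4\cdot4=16$ vertices. At $k=1$ the behaviour is determined by $a(x)$ together with the table $b(x,y)=b(y,m_1(x))$.

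\begin{itemize}
\item[$\bullet$] A binary message from binary $x$ can be the identity. So every table $b(x,y)$ is realized, and there are $4\cdot16=64$ vertices.
\item[$\bullet$] At $k=2$ Bob's output is an arbitrary table $b(x,y)$, as before. Choosing $m_1=x$ and $m_2=y$ lets Alice's output be any table $a(x,y)$, so there are $16\cdot16=256$ vertices.
\end{itemize}

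In each case we need all such pairs to be jointly achievable; the identity messages make this immediate.

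\textbf{Dimensions.} These follow from Theorem~\ref{thm:affine}, since the schedules satisfy (i)--(iv). The reach sets are as follows.

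\begin{itemize}
\item[$\bullet$] $k=0$: $R(A)=\{x\}$ and $R(B)=\{y\}$, giving $2+2+4=8$.
\item[$\bullet$] $k=1$: $R(A)=\{x\}$ and $R(B)=\{x,y\}$, giving $2+4+4=10$, as in the example above.
\item[$\bullet$] $k=2$: both reach sets are $\{x,y\}$, giving $12$.
\end{itemize}

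\textbf{Exact descriptions.} Here we use the observation that the deterministic behaviours above are all deterministic behaviours satisfying the relevant constraint.

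For $k=2$, the 256 vertices are all deterministic maps $(x,y)\mapsto(a,b)$. Any normalized behaviour is, context by context, a distribution over outputs. It is also a convex combination of product deterministic strategies, obtained by taking independent choices at each context; this works because the input-independent mixture over the product of per-context distributions reproduces each conditional. Hence $\PC^{(2)}_2$ is the whole behaviour polytope.

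For $k=1$, necessity of the marginal condition is Eq.~\eqref{eq:nodep}. For sufficiency, take $P$ with $P(a\mid x,y)=P_A(a\mid x)$. Write
\[
P(ab\mid xy)=P_A(a\mid x)\,P(b\mid a,x,y),
\]
where the conditional is arbitrary whenever $P_A(a\mid x)=0$. Then draw, independently for each $x$, a value $a(x)\sim P_A(\cdot\mid x)$. Also draw, independently for each triple $(x,a,y)$, a value $\beta(x,a,y)\sim P(\cdot\mid a,x,y)$. The strategy is: Alice outputs $a(x)$ and sends $m_1=x$; Bob outputs $b=\beta(x,a(x),y)$, which he can compute because the shared $\lambda$ contains the whole table $a(\cdot)$. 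Each resulting deterministic strategy is one of the 64 vertices, and the mixture reproduces $P$. This is the step I expect to need the most care: the argument uses shared randomness that carries Alice's table, and the mixture must be input-independent. Independence across contexts handles the latter.

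\textbf{Strictness.} We exhibit a separating target at each level.

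\begin{itemize}
\item[$\bullet$] $\PC^{(2)}_0\subsetneq\PC^{(2)}_1$: the behaviour with $a=0$ and $b=x$ signals from Alice to Bob. Alternatively, the PR box lies in $\PC^{(2)}_1$, because it is no-signalling for Alice's marginal, but violates CHSH, so it is not in $\PC^{(2)}_0$.
\item[$\bullet$] $\PC^{(2)}_1\subsetneq\PC^{(2)}_2$: the behaviour with $a=y$ and $b=0$ violates Eq.~\eqref{eq:example}.
\end{itemize}

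The increase in dimension already certifies strictness at both steps, but these explicit witnesses make it concrete.
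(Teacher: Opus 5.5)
Your proposal is correct and follows the paper's proof closely. It uses the identity messages $m_1=x$, $m_2=y$ for the vertex counts, Theorem~\ref{thm:affine} with the same reach sets for the dimensions, the same conditional-sampling construction (independent $a(x)\sim P_A(\cdot\mid x)$ and $\beta(x,a,y)$ carried by the shared randomness) for the one-message class, and per-context sampling for the two-message class. The only difference is that the paper deduces strictness from the dimension jumps alone, whereas you also give explicit witnesses ($a=0,\ b=x$ or PR, and $a=y,\ b=0$); the paper uses these witnesses later, in Eq.~\eqref{eq:twocert}.
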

\begin{proof}
With $m_1=x$, $b$ is an arbitrary function of $(x,y)$, and with $m_1=x$, $m_2=y$ both outputs are; this gives the vertex counts, the $k=1$ count agreeing with the one-bit vertex formula of Ref.~\cite{Cruzeiro2019}. The dimensions follow from Theorem~\ref{thm:affine} with reach sets $(\{x\},\{y\},\{x,y\})$, $(\{x\},\{x,y\},\{x,y\})$, and $(\{x,y\},\{x,y\},\{x,y\})$ for $(E_A,E_B,E_{AB})$, and give strictness. For the one-message class, necessity is immediate because Alice's output depends only on $x$ and $\lambda$. For sufficiency, let $P_A(a\mid x)=P(a\mid x,y)$ and $P_{B\mid A}(b\mid a,x,y)=P(a,b\mid x,y)/P_A(a\mid x)$ whenever $P_A(a\mid x)>0$. Let $\lambda$ consist of independent components $\alpha_x\sim P_A(\cdot\mid x)$ and $\beta_{x,y,a}\sim P_{B\mid A}(\cdot\mid a,x,y)$; Alice outputs $\alpha_x$ and sends $m_1=x$, and Bob, who knows $x$ and $\lambda$, outputs $\beta_{x,y,\alpha_x}$. Each $\lambda$ fixes a deterministic $k=1$ protocol, and
\begin{equation}
 \Pr(a,b\mid x,y)=P_A(a\mid x)\,P_{B\mid A}(b\mid a,x,y)=P(a,b\mid x,y).
 \label{eq:condsampling}
\end{equation}
When $P_A(a\mid x)=0$ any distribution may be assigned to $\beta_{x,y,a}$, since that branch has probability zero and $P(a,b\mid x,y)=0$ for every $b$. In the two-message class both agents know $(x,y)$, and sampling $(a,b)$ from $P(\cdot,\cdot\mid x,y)$ with shared randomness yields every behaviour.
\end{proof}

The finite-input observation agrees with earlier communication-assisted treatments~\cite{BaconToner2003,Cruzeiro2019}. It differs from Toner and Bacon's one-bit simulation of arbitrary projective measurements on a singlet~\cite{TonerBacon2003}: a binary input is sent here, and no claim is made that one bit transmits an arbitrary finite input alphabet. Figure~\ref{fig:two} shows the information flow.

\begin{figure}[tb]
\centering
\caption{Two-agent hierarchy with Alice speaking first. Choosing the messages to be the speakers' inputs realizes all deterministic tables with the displayed dependencies; the affine dimensions are $8$, $10$, and $12$.}
\begin{tikzpicture}[font=\small]
 \foreach \j/\k in {0/0,1/1,2/2}{
  \node[anchor=east,font=\bfseries] at (-3.15,-1.55*\j) {$k=\k$};
  \node[box,text width=1.85cm] (A\j) at (-1.65,-1.55*\j) {$A$ receives $x$};
  \node[box,text width=1.85cm] (B\j) at (1.95,-1.55*\j) {$B$ receives $y$};
 }
 \node[below=2pt of A0] {$a(x)$};
 \node[below=2pt of B0] {$b(y)$};
 \draw[arr] (A1)--node[above,font=\footnotesize] {$m_1=x$}(B1);
 \node[below=2pt of A1] {$a(x)$};
 \node[below=2pt of B1] {$b(x,y)$};
 \draw[arr] (A2.north east) to[bend left=15]
 node[above,font=\footnotesize] {$m_1=x$}(B2.north west);
 \draw[arr] (B2.south west) to[bend left=15]
 node[below,font=\footnotesize] {$m_2=y$}(A2.south east);
 \node[below=17pt of A2] {$a(x,y)$};
 \node[below=17pt of B2] {$b(x,y)$};
\end{tikzpicture}
\label{fig:two}
\end{figure}

For PR, a uniform shared bit $s$ gives the one-message protocol
\begin{equation}
 a=s,\qquad m_1=x,\qquad b=s\oplus xy.
 \label{eq:prprotocol}
\end{equation}
The targets $a=0,b=x$ and $a=y,b=0$ diagnose the direction of information flow, with equality violations
\begin{equation}
 E_B(0,0)-E_B(1,0)=2,\qquad E_A(0,0)-E_A(0,1)=2 .
 \label{eq:twocert}
\end{equation}
The first difference vanishes on $\PC^{(2)}_0$ and the second on both $\PC^{(2)}_0$ and $\PC^{(2)}_1$ (Table~\ref{tab:two}). The last target needs two events only because of the prescribed order, so the minimum within a fixed hierarchy is not an intrinsic cost of the target over all protocols.

\begin{table}[tb]
\centering
\caption{Two-agent target membership. $k_{\min}$ counts events in
Eq.~\eqref{eq:two}. Noisy PR rows use
$P_{\mathrm{PR},v}=vP_{\mathrm{PR}}+(1-v)\U$.}
\label{tab:two}
\small
\begin{tabular}{lcccc}
\toprule
Target & $\PC^{(2)}_0$ & $\PC^{(2)}_1$ & $\PC^{(2)}_2$ & $k_{\min}$\\
\midrule
Uniform & Y & Y & Y & 0\\
Noisy PR, $v=1/2$ & Y & Y & Y & 0\\
Noisy PR, $v=7/10$ & N & Y & Y & 1\\
Tsirelson & N & Y & Y & 1\\
PR & N & Y & Y & 1\\
$a=0,\ b=x$ & N & Y & Y & 1\\
$a=y,\ b=0$ & N & N & Y & 2\\
\bottomrule
\end{tabular}
\end{table}

\subsection{Three agents: broadcast and relay}
\label{sec:three}

In the broadcast hierarchy $\PC^{\mathrm{bc}}_k$, one bit is one symbol delivered to both other agents; Alice broadcasts first and Bob second:
\begin{align}
 k=0:&\quad a(x),\ b(y),\ c(z),\nonumber\\
 k=1:&\quad m_1(x);\quad a(x),\ b(y,m_1),\ c(z,m_1),\nonumber\\
 k=2:&\quad m_2(y,m_1);\quad a(x,m_2),\ b(y,m_1),\nonumber\\
     &\hspace{39mm}c(z,m_1,m_2).
 \label{eq:broadcast}
\end{align}
In the point-to-point relay hierarchy $\PC^{\mathrm{pt}}_k$, the events are Alice to Bob and then Bob to Charlie (Fig.~\ref{fig:reach}):
\begin{align}
 k=0:&\quad a(x),\ b(y),\ c(z),\nonumber\\
 k=1:&\quad m_1(x);\quad a(x),\ b(y,m_1),\ c(z),\nonumber\\
 k=2:&\quad m_2(y,m_1);\nonumber\\
     &\quad a(x),\ b(y,m_1),\ c(z,m_2).
 \label{eq:relay}
\end{align}
Charlie is silent in both families, and all seven moments at all eight contexts are retained.

\begin{table*}[tb]
\centering
\caption{Reach sets, affine dimensions, and numbers of distinct deterministic behaviours for three-agent schedules. A string such as $xy$ denotes the input set $\{x,y\}$.}
\label{tab:reach}
\begin{tabular}{lcccccccrr}
\toprule
Schedule & $A$ & $B$ & $C$ & $AB$ & $AC$ & $BC$ & $ABC$ & Dimension & Vertices\\
\midrule
No messages & $x$ & $y$ & $z$ & $xy$ & $xz$ & $yz$ & $xyz$ & 26 & 64\\
One broadcast & $x$ & $xy$ & $xz$ & $xy$ & $xz$ & $xyz$ & $xyz$ & 34 & 1024\\
Two broadcasts & $xy$ & $xy$ & $xyz$ & $xy$ & $xyz$ & $xyz$ & $xyz$ & 44 & 65536\\
One relay bit & $x$ & $xy$ & $z$ & $xy$ & $xz$ & $xyz$ & $xyz$ & 32 & 256\\
Two relay bits & $x$ & $xy$ & $xyz$ & $xy$ & $xyz$ & $xyz$ & $xyz$ & 42 & 5632\\
\bottomrule
\end{tabular}
\end{table*}

\begin{proposition}
\label{prop:three}
Both three-agent hierarchies are strictly nested at levels $0,1,2$, with the dimensions and vertex counts of Table~\ref{tab:reach}, and $\PC^{\mathrm{pt}}_k\subsetneq\PC^{\mathrm{bc}}_k$ for $k=1,2$.
\end{proposition}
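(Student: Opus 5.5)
The plan is to read everything off Theorem~\ref{thm:affine} together with an explicit description of the deterministic behaviours at each level. There are four steps: establish the inclusions, compute the dimensions, count the vertices, and deduce strictness.

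\textbf{Inclusions.} Every level can ignore any message, so $\PC^{\mathrm{bc}}_0\subseteq\PC^{\mathrm{bc}}_1\subseteq\PC^{\mathrm{bc}}_2$, and likewise for the relay family. For the cross-family inclusions I would compare the schedules in Eqs.~\eqref{eq:broadcast} and~\eqref{eq:relay} directly.
\begin{itemize}
\item[$k=1$:] the relay message $m_1(x)$ goes only to Bob. The broadcast $m_1(x)$ also reaches Charlie, who may simply ignore it.
\item[$k=2$:] the relay gives $c(z,m_2)$ with $m_2(y,m_1)$ sent to Charlie, and $a(x)$. In the broadcast schedule Charlie receives both $m_1$ and $m_2$, and Alice receives $m_2$, so both may ignore what they do not need.
\end{itemize}
Hence every deterministic relay protocol is a broadcast protocol at the same level, and $\PC^{\mathrm{pt}}_k\subseteq\PC^{\mathrm{bc}}_k$.

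\textbf{Dimensions.} I would compute the reach sets $R_{\mathcal A}(\{i\})$ from the information graph of each schedule, take unions to get the rows of Table~\ref{tab:reach}, and sum $2^{|R(S)|}$ as in Eq.~\eqref{eq:archdim}. All five schedules satisfy (i)--(iv), with alphabets of size two. This gives:
\begin{itemize}
\item[$26$:] no messages;
\item[$34$:] one broadcast;
\item[$44$:] two broadcasts;
\item[$32$:] one relay bit;
\item[$42$:] two relay bits.
\end{itemize}

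\textbf{Strictness.} Nested sets with strictly increasing affine dimension are strictly nested. This handles $26<34<44$ and $26<32<42$, and it also gives $\PC^{\mathrm{pt}}_k\subsetneq\PC^{\mathrm{bc}}_k$, since $32<34$ and $42<44$.

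\textbf{Vertex counts.} A deterministic behaviour is determined by its three output tables, and distinct tables give distinct behaviours. So I would count the achievable triples of functions.
\begin{itemize}
\item \emph{No messages:} $4^3=64$.
\item \emph{One broadcast:} choosing $m_1=x$ makes $a(x)$, $b(x,y)$ and $c(x,z)$ arbitrary, giving $4\cdot16\cdot16=1024$.
\item \emph{Two broadcasts:} choosing $m_1=x$, $m_2=y$ makes $a(x,y)$, $b(x,y)$ and $c(x,y,z)$ arbitrary, giving $16\cdot16\cdot256=65536$.
\item \emph{One relay bit:} choosing $m_1=x$ gives $4\cdot16\cdot4=256$.
\end{itemize}

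\textbf{Main obstacle: the two-relay count.} Here $a(x)$ and $b(x,y)$ are unrestricted, contributing $4\cdot16$, and $m_1=x$ can be chosen without affecting Charlie. Charlie's table has the form $c(z,g(x,y))$, where $g=m_2$ is a binary function of $(x,y)$. Such a $c$ is exactly a map from the four pairs $(x,y)$ to the four functions of $z$, namely the columns $c(x,y,\cdot)$, whose image has at most two elements:
\begin{itemize}
\item if the image has two elements, $g$ labels which column is used;
\item conversely, any $g$ yields at most two columns.
\end{itemize}
Counting such maps gives $4+\binom42(2^4-2)=88$ choices for $c$, so the total is $64\cdot88=5632$. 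I expect this counting step, and the justification that distinct tables give distinct vertices, to be the only non-routine part.
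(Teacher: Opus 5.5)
Your proposal is correct and follows the paper's proof essentially step for step. You obtain the dimensions from Theorem~\ref{thm:affine}, the vertex counts by letting the messages carry the inputs, the $88$ Charlie tables from the ``at most two distinct rows'' characterization, and strictness (including $\PC^{\mathrm{pt}}_k\subsetneq\PC^{\mathrm{bc}}_k$) from ignoring received broadcast copies together with the unequal affine dimensions; your explicit inclusion checks and the product structure of the two-relay count only spell out what the paper leaves implicit.
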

\begin{proof}
The dimensions follow from Theorem~\ref{thm:affine}. For broadcast, $m_1=x$ and $m_2=y$ give arbitrary tables with the indicated dependencies. For two relay bits, $a(x)$ and $b(x,y)$ are arbitrary, but $c(x,y,z)$, viewed as four rows indexed by $(x,y)$, has at most two distinct rows, giving $4+\binom42(2^4-2)=88$ Charlie tables and $4\cdot16\cdot88=5632$ behaviours. Ignoring a received copy of a broadcast emulates the relay, and the unequal dimensions give strictness.
\end{proof}

The relay illustrates the difference between affine dimension and capacity: its $42$ dimensions are those of a schedule in which Charlie receives the full label $(x,y)$, yet one bit permits only $88$ of $256$ Charlie tables. The table whose rows are $00,01,10,11$ for $(x,y)=00,01,10,11$ satisfies every relay equality but is not achievable.

\paragraph{One-bit GHZ simulation.}
The $\sigma_x/\sigma_y$ GHZ target belongs to both one-message classes. That one bit suffices follows from Tessier et al., who reproduce all Pauli-product correlations of the $n$-qubit GHZ state with $n-2$ bits, optimal for that measurement set~\cite{Tessier2005}. We give a realization within the declared hierarchies. Let $s_0,s_1,s_2$ be independent unbiased shared signs and
\begin{equation}
 \begin{array}{c|cc}
 G_{xy} & y=0 & y=1\\\hline
 x=0 & s_1 & -s_2\\
 x=1 & -s_2 & -s_1
 \end{array}
 \qquad C(0)=s_1,\quad C(1)=s_2.
 \label{eq:ghzprotocol}
\end{equation}
Alice outputs $s_0$ and sends $x$ to Bob, Bob outputs $s_0G_{xy}$, and Charlie outputs $C(z)$. Every one- and two-body moment contains an unpaired unbiased sign, and $\mathbb E(G_{xy}C(z))$ equals $+1$ at $000$, $-1$ at $011$, $101$, and $110$, and $0$ elsewhere, which is Eq.~\eqref{eq:ghz3}.

\paragraph{Further targets.}
For PR on $A,B$ with $c=0$, Eq.~\eqref{eq:prprotocol} applies to either one-message schedule. For $a=b=0$, $c=x\oplus y$,
\begin{equation}
 E_C(0,0,0)-E_C(0,1,0)=2,
 \label{eq:ccert}
\end{equation}
which vanishes in both one-message classes; two broadcasts send $x$ and $y$, and the two-bit relay sends $x$ and then $x\oplus y$. Broadcast and relay differ for $a=b=0$, $c=x$, where one broadcast suffices but two relay bits are needed, and for $a=y$, $b=c=0$, which the prescribed relay never achieves (Table~\ref{tab:three}). For $a=z$, $b=c=0$,
\begin{equation}
 E_A(0,0,0)-E_A(0,0,1)=2
 \label{eq:zcert}
\end{equation}
vanishes in every schedule in which Charlie is silent. The certificate identifies the required dependence, information about $z$ reaching Alice before her output, but not a unique physical channel. If Charlie also broadcasts $z$ after the two broadcasts of Eq.~\eqref{eq:broadcast} and all outputs follow this event, every deterministic table becomes achievable and the class is the full product of eight seven-dimensional simplices.

\begin{table*}[tb]
\centering
\caption{Three-agent membership and scheduled minimum event counts. A dash means that no level $k\le2$ realizes the target. $P_{\mathrm{GHZ},v}=vP_{\mathrm{GHZ}}+(1-v)\U$.} \label{tab:three}
\begin{tabular}{lccccccc}
\toprule
Target & $\PC_0$ & $\PC^{\mathrm{bc}}_1$ & $\PC^{\mathrm{bc}}_2$ & $k_{\min}^{\mathrm{bc}}$
 & $\PC^{\mathrm{pt}}_1$ & $\PC^{\mathrm{pt}}_2$ & $k_{\min}^{\mathrm{pt}}$\\
\midrule
Uniform & Y & Y & Y & 0 & Y & Y & 0\\ Noisy GHZ, $v=1/2$ & Y & Y & Y & 0 & Y & Y & 0\\
GHZ, $\sigma_x/\sigma_y$ & N & Y & Y & 1 & Y & Y & 1\\
PR on $A,B$; $c=0$ & N & Y & Y & 1 & Y & Y & 1\\
$a=b=0,\ c=x$ & N & Y & Y & 1 & N & Y & 2\\
$a=b=0,\ c=x\oplus y$ & N & N & Y & 2 & N & Y & 2\\
$a=y,\ b=c=0$ & N & N & Y & 2 & N & N & --\\
$a=z,\ b=c=0$ & N & N & N & -- & N & N & --\\
\bottomrule
\end{tabular}
\end{table*}

\section{Low-order projections}
\label{sec:loss}

For $N\ge2$, consider $|\mathrm{GHZ}_N\rangle=(|0\rangle^{\otimes N}+|1\rangle^{\otimes N})/\sqrt2$ with each agent measuring $\sigma_x$ (input $0$) or $\sigma_y$ (input $1$). Every proper reduced state equals the corresponding reduction of the separable mixture $\rho_{\mathrm{mix}}=\tfrac12(|0\rangle\langle0|^{\otimes N}+|1\rangle\langle1|^{\otimes N})$, so correlations among fewer than $N$ parties do not reveal the GHZ coherence. This obstruction, for every order $k<N$ and without a symmetry restriction, is given in Ref.~\cite[Sec.~VI]{Tan2016}; the related Bell-inequality observation appears in Ref.~\cite{Tura2015}. Since $\rho_{\mathrm{mix}}$ is separable, the order-$K$ projection of the GHZ behaviour lies in $\CC_K$ for every $K<N$ and every choice of settings.

For $\sigma_x/\sigma_y$ settings the moments are explicit. The reductions of $\rho_{\mathrm{mix}}$ are diagonal in the computational basis, whereas nonempty products of $\sigma_x$ and $\sigma_y$ are off-diagonal, so all proper moments vanish. The GHZ state is an eigenvector of each $N$-fold product with an even number $n_y$ of $\sigma_y$ factors~\cite{Mermin1990}, with eigenvalue $(-1)^{n_y/2}$, and products with odd $n_y$ map it to an orthogonal state. Hence
\begin{align}
 E_S(x_S)&=0\quad(\emptyset\ne S\subsetneq[N]),\nonumber\\
 E_{[N]}(x)&=\cos\Bigl(\tfrac\pi2\sum_i x_i\Bigr).
 \label{eq:ghzall}
\end{align}

\begin{proposition}
\label{prop:collapse}
For every $K<N$, $\Pi_K(P_{\mathrm{GHZ}})=\Pi_K(\U)=0$. This point is the centre of the $\ell_1$ ball of Eq.~\eqref{eq:ball}, so every valid inequality $\bm c^\top u\le\beta$ for $\CC_K$ with $\bm c\ne0$ has slack at least $\|\bm c\|_\infty$ there.
\end{proposition}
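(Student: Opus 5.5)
The argument has three parts: compute the retained coordinates of $P_{\mathrm{GHZ}}$, identify the resulting point with the centre of the ball of Lemma~\ref{lem:l1}, and read off the slack from the last clause of that lemma.

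\emph{Retained coordinates.} The coordinates kept by $\Pi_K$ are the moments $E_S(x_S)$ with $1\le|S|\le K$. For $K<N$, every such $S$ is a proper nonempty subset of $[N]$. Equation~\eqref{eq:ghzall} gives $E_S(x_S)=0$ for all these $S$ and all $x_S$, so $\Pi_K(P_{\mathrm{GHZ}})=0$. The uniform behaviour has all moments equal to zero (Sec.~\ref{sec:moments}), so $\Pi_K(\U)=0$ as well. The only input here is Eq.~\eqref{eq:ghzall}, which the paper derives before the proposition from the diagonal reductions of $\rho_{\mathrm{mix}}$ and the off-diagonal form of nonempty $\sigma_x/\sigma_y$ products. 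If more care were wanted, I would note that a proper reduction of the GHZ state is $\mathrm{tr}$ over at least one qubit, which kills the coherence $|0\cdots0\rangle\langle1\cdots1|$. The reduced operator is therefore diagonal, and a diagonal operator has zero expectation for any nonempty tensor product of $\sigma_x$ and $\sigma_y$.

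\emph{Centre of the ball.} The ball in Eq.~\eqref{eq:ball} is $\{u:\|u\|_1\le1\}$ in $\R^{D_0(K)}$, and its centre is $u=0$. Lemma~\ref{lem:l1} says this ball lies inside $\CC_K$, so the origin is an interior point of $\CC_K$ at $\ell_1$-distance at least one from the boundary.

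\emph{Slack.} Take a valid inequality $\bm c^\top u\le\beta$ with $\bm c\ne0$. At $u=0$ the slack is $\beta-\bm c^\top 0=\beta$. The last clause of Lemma~\ref{lem:l1} gives $\beta\ge\|\bm c\|_\infty>0$, so the slack is at least $\|\bm c\|_\infty$. This is the step that carries the quantitative content. I would make it explicit by noting that the lemma's argument evaluates $\bm c$ at $\pm\Pi_K(\bm e_{S,r})\in\CC_K$ for the coordinate where $|c_{S,r}|$ is maximal.

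None of the steps is a real obstacle. The only point needing care is to confirm that every retained coordinate with $|S|\le K$ is indexed by a proper subset, and this is exactly where the hypothesis $K<N$ is used.
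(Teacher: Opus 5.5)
Your proposal is correct and follows the paper's proof. The paper likewise obtains the vanishing of the retained GHZ moments from Eq.~\eqref{eq:ghzall}, using that every $S$ with $|S|\le K<N$ is proper, notes that $\U$ has zero moments, and reads the slack bound $\beta\ge\|\bm c\|_\infty$ at $u=0$ directly from Lemma~\ref{lem:l1}.
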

\begin{proof}
The retained GHZ moments vanish by Eq.~\eqref{eq:ghzall}, the moments of $\U$ vanish, and Lemma~\ref{lem:l1} gives the remaining statements.
\end{proof}

The qualitative content of Proposition~\ref{prop:collapse} is the obstruction of Ref.~\cite{Tan2016}. The proposition adds an explicit moment-space proof for $\sigma_x/\sigma_y$ settings and a quantitative margin: every valid inequality has slack at least $\|\bm c\|_\infty$ at the common image, and a Euclidean ball of radius $1/\sqrt{D_0(K)}$ lies inside $\CC_K$. At $N=3$, the target violates Mermin's inequality maximally but lies at the origin of the $18$-dimensional second-order projection (Fig.~\ref{fig:collapse}), with guaranteed Euclidean radius $1/\sqrt{18}$; this radius is a lower bound, not the inradius.

\begin{figure}[tb]
\centering
\caption{Distinct complete behaviours with identical one- and two-body moments. The ellipse schematically represents a neighbourhood of the origin in an $18$-dimensional polytope; the three-body moments distinguish the two sources.}
\begin{tikzpicture}[font=\small]
 \node[box,text width=2.85cm] (ghz) at (-2.0,1.15)
 {GHZ, $\sigma_x/\sigma_y$ settings\\$\mathcal M=4$\\Proper moments $=0$};
 \node[box,text width=2.85cm,draw=flowgreen,fill=flowgreen!5] (uni) at (-2.0,-1.15)
 {Uniform local behaviour\\$\mathcal M=0$\\All moments $=0$};
 \draw[flowgreen,fill=flowgreen!7,thick] (2.0,0) ellipse[x radius=1.6cm,y radius=1.1cm];
 \fill[flowgreen] (2.0,0) circle[radius=0.055cm];
 \node[above right] at (2.03,0.05) {$u=0$};
 \node[align=center,text width=3.2cm] at (2.0,-1.7)
 {Interior of $\Pi_2(\PC_0)$\\$D_0(2)=18$};
 \draw[arr] (ghz.east)--(1.9,0.12);
 \draw[arr] (uni.east)--(1.9,-0.12);
 \node[fill=white,inner sep=2pt] at (0.25,0.78) {$\Pi_2$};
 \node[fill=white,inner sep=2pt] at (0.25,-0.78) {$\Pi_2$};
 \node[font=\footnotesize,flowgray] at (2.0,1.43) {Schematic projection};
\end{tikzpicture}
\label{fig:collapse}
\end{figure}

\begin{definition}
A no-signalling target $P$ is $K$-body detectable if some valid inequality for the local model supported on moments of order at most $K$ is violated by $P$, equivalently if $\Pi_K(P)\notin\CC_K$.
\end{definition}

Detectability is decided by the membership program~\eqref{eq:membership}, and detectability at order $K$ implies detectability at every larger order. The three-agent $\sigma_x/\sigma_y$ GHZ target requires $K=3$, and at $N=2$ the $\sigma_x/\sigma_y$ GHZ behaviour is itself local. The quantitative statement depends on the settings: measurements involving $\sigma_z$ can give nonzero proper moments, which still agree with those of $\rho_{\mathrm{mix}}$ but need not lie at an interior point. Low-order Bell tests do detect other entangled states~\cite{Tura2014}. A classical analogue shows the same loss: if independent unbiased bits $a,b$ determine $c=a\oplus b$, every pair is independent while $a\oplus b\oplus c=0$ exactly.

\section{Agents and observers}
\label{sec:agents}

\subsection{Local decision rules}
\label{sec:qdt}

In decision terms, $x_i$ is the information available to agent $i$ and $a_i$ its choice; a behaviour is local if it can be explained by information shared before the task, without exchange during it. Consider a binary response rule motivated by quantum decision theory,
\begin{equation}
 p_i(1\mid x_i,\lambda)=f_i(x_i,\lambda)+\kappa\,q_i(x_i,\lambda)\in[0,1],
 \label{eq:qdtrule}
\end{equation}
where the utility and attraction decomposition follows the decision-theory literature~\cite{YS2010,Favre2016,Yukalov2020}, and $\kappa$ is an interpolation parameter introduced here, not a parameter claimed for those models. A value is admissible if every response probability lies in $[0,1]$.

\begin{corollary}
\label{cor:qdt}
If each response uses only its own input and input-independent shared randomness, and residual random choices are conditionally independent given that randomness, the joint behaviour lies in $\PC_0$ for every admissible $\kappa$, and its communication cost in the present resource model is zero.
\end{corollary}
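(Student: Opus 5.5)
The approach is to recognise that, however Eq.~\eqref{eq:qdtrule} is motivated, it only specifies a conditional response probability $p_i(1\mid x_i,\lambda)$ that depends on agent $i$'s own input and on the shared variable. Once this is made explicit, the behaviour has exactly the form of Eq.~\eqref{eq:local}, and the corollary follows from the definition of $\PC_0$. The parameter $\kappa$ will play no role beyond fixing which function $p_i$ is used.

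First, fix an admissible $\kappa$. Admissibility means that $p_i(1\mid x_i,\lambda)$ and $p_i(0\mid x_i,\lambda)=1-p_i(1\mid x_i,\lambda)$ both lie in $[0,1]$ for every $x_i$ and $\lambda$, so each is a valid conditional distribution. By the hypothesis of conditional independence given $\lambda$, the joint response at fixed $\lambda$ factorizes as $\prod_i p_i(a_i\mid x_i,\lambda)$. Because $\lambda$ is independent of the inputs, averaging over $\mu$ gives precisely Eq.~\eqref{eq:local}. This argument is indifferent to the internal structure of $p_i$: the interference term $\kappa q_i$ is just part of a local response function.

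Second, to connect with the polytope description, I will absorb the private randomness into the shared variable. Let $\lambda'=(\lambda,u_1,\dots,u_N)$, where the $u_i$ are independent uniform variables on $[0,1]$. Agent $i$ then outputs $a_i=\one\{u_i<p_i(1\mid x_i,\lambda)\}$, which is a deterministic table $\alpha_i(x_i)$ once $\lambda'$ is fixed. Since there are finitely many such tables, the induced distribution over tables is a finite convex combination of the $2^{\sum_i d_i}$ local vertices, so the behaviour lies in $\PC_0$. The zero communication cost then follows because the realization uses the schedule with no message events, whose class is $\PC_0$.

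The main obstacle is measure-theoretic bookkeeping rather than any genuine difficulty. The variable $\lambda$ may be continuous, and the construction needs $p_i$ to be measurable in $\lambda$ so that the pushforward onto the finite set of tables is well defined. I would state this as a standing assumption, under which Carath\'eodory's theorem, or simply the finiteness of the table set, gives a finite mixture.
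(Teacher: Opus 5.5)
Your proposal is correct and follows the paper's reasoning. The paper gives no separate proof: it treats the corollary as immediate, because the rule has the form of Eq.~\eqref{eq:local}, and Sec.~\ref{sec:moments} already notes that absorbing private randomness into $\lambda$ makes this a mixture of deterministic local tables, which the schedule with no message events realizes. Your uniform-threshold construction, with the measurability caveat, spells out that absorption step explicitly.
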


Under these assumptions an interference term can move the behaviour within the local polytope but cannot violate a Bell inequality. Yukalov and Sornette describe prospect operators as entangling because of the composite structure of a decision maker's action prospects~\cite{YS2010}; this does not by itself identify the Hilbert-space factors with separate agents holding independently chosen inputs, and an entangled representation alone does not imply Bell nonlocality~\cite{Brunner2014}. A distributed model that specifies a Bell-nonlocal joint law must account for a shared quantum resource, joint access to private inputs, or information exchange; network models with repeated exchange~\cite{Yukalov2018} correspond to a different schedule. The same applies to the classical networks of Refs.~\cite{YY2026a,YY2026b}: there the initial single-step choices are made before any exchange, whereas each later step updates a group's choice probability using the other groups' probabilities, which in the present terms is a message event. Those models describe aggregate choice fractions without setting choices, so the behaviour-level certificates of Sec.~\ref{sec:cert} would apply only after inputs are introduced.

A complementary construction corrects the joint distribution rather than the individual rules. For two binary responses, initially independent with common marginal $(\nu,1-\nu)$, $0<\nu<1$, add $\delta$ to the diagonal and subtract it from the off-diagonal entries:
\begin{equation}
 \begin{aligned}
 P_\delta=\bigl(&\nu^2+\delta,\ \nu(1-\nu)-\delta,\\
 &\nu(1-\nu)-\delta,\ (1-\nu)^2+\delta\bigr),
 \end{aligned}
 \label{eq:pdelta}
\end{equation}
in the order $00,01,10,11$. Both marginals remain $(\nu,1-\nu)$, and for these marginals Eq.~\eqref{eq:pdelta} is the general joint distribution, with nonnegativity requiring $-\min\{\nu^2,(1-\nu)^2\}\le\delta\le\nu(1-\nu)$. In signed moments, $E_A=E_B=2\nu-1$ and $E_{AB}=(2\nu-1)^2+4\delta$, so only the connected two-body moment changes, and the Pearson correlation coefficient is $\delta/[\nu(1-\nu)]$. At $\delta=\nu(1-\nu)$ the outputs agree on every trial and their mutual information equals the binary entropy $H_2(\nu)$. Using the same $P_\delta$ at every context, the shared variable $\lambda=(\lambda_A,\lambda_B)\sim P_\delta$, with Alice returning $\lambda_A$ and Bob returning $\lambda_B$, gives a decomposition of the form of Eq.~\eqref{eq:local} for every admissible $\delta$. In Corollary~\ref{cor:qdt}, dependence between outputs can enter only through $\lambda$, and rules that do not depend on $\lambda$ yield product behaviours for every $\kappa$; the joint correction instead fixes the marginals and changes only the dependence, which shared randomness suffices to produce. Identifying $\delta$ with an attraction or interference factor of a specific decision-theory formulation would require a derivation within that formulation, which we do not supply; the example is an illustrative composite decision model. It also shows that a nonfactorized description need not imply a communication cost, since locality requires only the existence of some decomposition of the form of
Eq.~\eqref{eq:local}.

\subsection{Extended Wigner's friend scenarios}
\label{sec:ewf}

Extended Wigner's friend scenarios~\cite{Wigner1961,Brukner2018,FR2018,Bong2020} declare which agents measure, which records pass to whom, and when. Here the behaviour consists of the superobservers' recorded outputs given their settings.

\paragraph{Observer-independent facts.}
Brukner shows that locality, freedom of choice, and observer-independent facts imply a joint distribution over the outcomes of both settings at each site, which the quantum prediction contradicts through a CHSH violation~\cite{Brukner2018}. In the present terms the joint distribution places the behaviour in $\PC_0$~\cite{Fine1982}, and the result is a separation by Eq.~\eqref{eq:chsh}; see Refs.~\cite{Proietti2019,Healey2018,HC2024} for an experimental test and for later critiques.

\paragraph{Local Friendliness.}
Local Friendliness (LF) combines Absoluteness of Observed Events with Local Agency~\cite{Bong2020,HC2024}. In an LF model the friends' outcomes act like the shared variable $\lambda$ of Eq.~\eqref{eq:local}, and LF behaviours form a polytope between $\PC_0$ and $\NS$ that equals $\PC_0$ in the canonical multipartite scenarios identified in Ref.~\cite{HC2024}. Table~\ref{tab:compare} compares it with the declared classes, and three consequences follow.

\begin{table}[!tb]
\centering
\caption{Comparison of the LF polytope with the declared classes. $\PC_0$ and $\NS$ denote the Bell-local and no-signalling polytopes.}
\label{tab:compare}
\footnotesize
\setlength{\tabcolsep}{3pt}
\renewcommand{\arraystretch}{1.15}
\begin{tabular}{@{}>{\raggedright\arraybackslash}p{0.18\columnwidth}
                >{\raggedright\arraybackslash}p{0.38\columnwidth}
                >{\raggedright\arraybackslash}p{0.38\columnwidth}@{}}
\toprule
 & LF polytope~\cite{HC2024} & $\PC_{\mathcal A}$ (this work)\\
\midrule
Generators
 & Query outputs equal the friends' recorded outcomes; remaining outputs follow a no-signalling extreme point on the non-query contexts~\cite[Thm.~8]{HC2024}
 & Deterministic protocols of the declared schedule (Definition~\ref{def:arch})\\
\addlinespace
Affine equalities
 & Normalization and no-signalling; $\dim\mathrm{LF}=\dim\PC_0=\dim\NS$~\cite[Cor.~12]{HC2024}
 & No-dependence equalities, Eq.~\eqref{eq:nodep}; complete by Theorem~\ref{thm:affine}\\
\addlinespace
Assumptions
 & Absoluteness of Observed Events and Local Agency; space-like separated superobservers; query input reads the friend's record; no communication
 & Fixed acyclic schedule; finite noiseless alphabets; input-independent timing and routing; shared randomness independent of inputs\\
\bottomrule
\end{tabular}
\end{table}

First, every LF polytope shares the affine hull of $\PC_0$ and $\NS$~\cite{Pironio2005,HC2024}, so an LF-violating no-signalling behaviour satisfies every affine equality valid on it and can be certified only by an inequality. A violated no-signalling equality excludes LF through Local Agency and names the remote setting involved, but says nothing about Absoluteness of Observed Events.

Second, the conditional-sampling construction in the proof of Proposition~\ref{prop:two} applies to any finite number of settings when the message alphabet has one symbol per setting of Alice. Every no-signalling behaviour, LF-violating or not, is then reproduced by one message from Alice to Bob; in the scenario with two superobservers and three settings each, where LF inequalities are violated~\cite{Bong2020,HC2024}, one ternary message suffices.

Third, for three superobservers with a friend and two settings each, the LF polytope equals $\PC_0$~\cite{HC2024}, so Mermin's inequality is an LF inequality that the $\sigma_x/\sigma_y$ GHZ behaviour violates when the friends' measurements are modelled unitarily~\cite{HC2024}. The noisy GHZ family is then LF-compatible exactly for $v\le1/2$, and by Proposition~\ref{prop:collapse} no LF inequality of order at most two detects the violation. Because $\PC_0\subseteq\mathrm{LF}$, the behaviours of Corollary~\ref{cor:qdt} also satisfy every LF inequality.

\paragraph{Scenarios without settings.}
The Frauchiger--Renner protocol~\cite{FR2018} involves no setting choices, so its announced results have a single context and are reproduced by shared randomness; none of the present certificates applies.

\paragraph{Possibilistic arguments.}
Hardy-type arguments~\cite{Hardy1992,Hardy1993,HC2023} have a convex counterpart for the declared classes. Write $P(e)$ for $P(a\mid x)$ at $e=(a,x)$. If a target assigns probability zero to every $e$ in a set $Z$ and positive probability to $e^\star$, and every deterministic protocol of the class that assigns zero to $Z$ also assigns zero to $e^\star$, then every protocol satisfies
\begin{equation}
 P(e^\star)\le\sum_{e\in Z}P(e),
 \label{eq:possibilistic}
\end{equation}
because its probabilities are $0$ or $1$, and the target violates this valid inequality. The argument requires deterministic vertices and so does not apply directly to the LF polytope~\cite{HC2024}.

Related reformulations and extensions of LF appear in Refs.~\cite{WCR2023,Ying2024,Walleghem2025}.

\section{Discussion}
\label{sec:disc}

The convex certificate of Eq.~\eqref{eq:membership} is the standard Bell-polytope construction~\cite{Brunner2014}. Communication-assisted polytope methods already distinguish fixed direction, input alphabets, and message budgets~\cite{BaconToner2003,Maxwell2014,Cruzeiro2019}, and the two-agent counts and input-transmission protocol of Sec.~\ref{sec:two} are elementary instances of that framework. Pironio showed that the violation of a Bell inequality lower-bounds the average communication needed to reproduce a behaviour, and that each behaviour has an optimal inequality for which that communication also suffices~\cite{Pironio2003}; average cost is a different resource measure from the fixed event budgets used here. Toner and Bacon address a more extensive measurement family on an entangled pair~\cite{TonerBacon2003}, and Tessier et al. give the $(n-2)$-bit model for Pauli GHZ correlations~\cite{Tessier2005}.

Other generalizations of the local polytope relax different assumptions. Partially deterministic polytopes~\cite{HWC2025}, which include the LF polytopes of Sec.~\ref{sec:ewf}, predetermine the outputs of only some inputs at each site and therefore lie between $\PC_0$ and $\NS$; a declared schedule instead restricts which inputs can reach which outputs and, once an input reaches a remote output, extends beyond $\NS$. Multipartite causal-correlation polytopes allow more general causal orders~\cite{Abbott2016}, and causal-structure analysis distinguishes models whose observable constraints are exhausted by equalities from models that also require inequalities~\cite{Khanna2024}. Theorem~\ref{thm:affine} concerns a narrower model, a fixed acyclic schedule with one shared source, arbitrary local functions, and binary outputs, and provides an explicit equality basis for it; we do not claim that equality constraints or the diagnostic use of no-signalling are new in general. Conditional-independence methods in causal discovery also relate observed distributions to graph structure under stated distributional and faithfulness conditions~\cite{Kalisch2007}. Here the schedule is supplied rather than discovered, and the equalities hold for every admissible distribution without a Gaussian assumption, although independence of the inputs from the shared randomness, timing, retained histories, and the allowed local functions are part of the model. Applied to every $T\subseteq S$, Eq.~\eqref{eq:nodep} gives
\begin{equation}
 P(a_S\mid x)=P(a_S\mid x_{R_{\mathcal A}(S)}),
 \label{eq:marginalnodep}
\end{equation}
the conditional independence of $a_S$ from the unreachable inputs given the reachable ones; a single moment equality generally tests less than this.

The non-obvious content of Theorem~\ref{thm:affine} is Remark~\ref{rem:onebit}: the affine hull depends on reachability alone, so capacity limits, which decide simulability, can be certified only by inequalities. Several questions remain open. Theorem~\ref{thm:affine} determines affine hulls but not facets, and the examples of Secs.~\ref{sec:affine} and~\ref{sec:three} show that capacity constraints can make a declared class strictly smaller than the set of normalized nonnegative behaviours satisfying its equalities. A facet description of declared classes, and the extension of Theorem~\ref{thm:affine} to communication-assisted LF classes in which the space-like separation of Local Agency is replaced by a declared schedule, are left for future work.

\section*{Acknowledgments}
We thank V.~I.~Yukalov for helpful comments.

\bibliographystyle{unsrtnat}
\bibliography{ref}

\clearpage
\appendix
\section{Worked example}
\label{app:example}

This appendix applies the main results to one three-agent task. It illustrates that a violated equality identifies a dependence on an input that cannot reach the agents involved, whereas a target that satisfies every equality, such as one limited only by message capacity, can be excluded only by an inequality.

Alice and Bob hold bits $x$ and $y$, and Charlie's bit $z$ selects which of them matters. With $u=(u_0,u_1)=(x,y)$, the answer and its sign are
\begin{equation}
 u_z=x\oplus z(x\oplus y),\qquad T(x,y,z)=(-1)^{u_z}.
 \label{eq:sel}
\end{equation}
In the \emph{direct} readout Charlie reports the answer: $a=b=0$, $c=u_z$. In the parity readout the group reports it: $a\oplus b\oplus c=u_z$, with $a$ and $b$ uniform and independent. In $P^{\mathrm{dir}}_v$, Charlie's output is replaced with probability $1-v$ by a fresh unbiased bit. In $P^{\mathrm{par}}_v$, the whole behaviour is replaced with probability $1-v$ by $\U$.

We show three things. Equalities exclude the direct readout whenever $x$ or $y$ fails to reach Charlie. The parity readout passes every equality and is excluded only by inequalities. Once all inputs reach Charlie, the size of the last message alone decides the direct readout. Table~\ref{tab:sel} summarizes the results; in the schedule ``Broadcast of $z$ by $C$'', Charlie sends $z$ to both others before any output is produced.

\begin{table}[b]
\centering
\caption{The selection task. $v^\star$ is the largest $v$ for which the noisy target lies in the class. Zero entries are certified by Eq.~\eqref{eq:seleq}, entries $1/2$ by $\mathcal I_{\mathrm{par}}\le4$, and $q/4$ by Eq.~\eqref{eq:idir}, where $m_2$ has $q\in\{2,3,4\}$ symbols.}
\label{tab:sel}
\small
\begin{tabular}{@{}lcccc@{}}
\toprule
Schedule & $R_{\mathcal A}(\{C\})$ & Dim. & $v^\star_{\mathrm{dir}}$ & $v^\star_{\mathrm{par}}$\\
\midrule
No messages & $z$ & 26 & 0 & $1/2$\\
One relay bit & $z$ & 32 & 0 & $1/2$\\
One broadcast by $A$ & $xz$ & 34 & 0 & $1/2$\\
Broadcast of $z$ by $C$ & $z$ & 34 & 0 & 1\\
Two relay messages & $xyz$ & 42 & $q/4$ & 1\\
Two broadcasts & $xyz$ & 44 & 1 & 1\\
\bottomrule
\end{tabular}
\end{table}

Write $A=(-1)^a$, $B=(-1)^b$, and $C=(-1)^c$ for the signed outputs.
Eq.~\eqref{eq:inversion} gives
\begin{equation*}
 P^{\mathrm{dir}}_v=\tfrac18(1+A)(1+B)(1+vTC),\quad
 P^{\mathrm{par}}_v=\tfrac18(1+vT\,ABC).
\end{equation*}
The direct readout has $E_A=E_B=E_{AB}=1$ and $E_S=vT$ for every $S\ni C$. Charlie's marginal therefore depends on $x$ and $y$, and the behaviour signals. The parity readout has $E_{ABC}=vT$ and all other moments zero, so it is no-signalling. Flipping the outputs by a shared uniform sign pattern $\epsilon$ with $\epsilon_A\epsilon_B\epsilon_C=1$ is permitted by assumption~(i) and shared randomness. It removes every moment of $P^{\mathrm{dir}}_v$ except $E_{ABC}$ and yields $P^{\mathrm{par}}_v$. Every class that contains the direct readout therefore contains the parity readout.

The direct readout fails at the equality step. Its moments satisfy
\begin{align}
 E_C(0,0,0)-E_C(1,0,0)&=2v,\nonumber\\
 E_C(0,0,1)-E_C(0,1,1)&=2v.
 \label{eq:seleq}
\end{align}
The first difference violates an equality of Corollary~\ref{cor:eqbasis} whenever $x$ does not reach Charlie, and the second whenever $y$ does not. Each violation names the missing input and persists for every $v>0$, which gives the zero entries of Table~\ref{tab:sel}. The parity readout satisfies every equality of every schedule, so only an inequality can exclude it.

At a local vertex $E_{ABC}=A(x)B(y)C(z)$, where $A(x)$, $B(y)$, $C(z)$ are the signed response tables, and
\begin{equation}
 \mathcal I_{\mathrm{par}}:=\sum_{x,y,z}T\,E_{ABC}
 =C(0)\,\alpha_-\beta_++C(1)\,\alpha_+\beta_-,
 \label{eq:ipar}
\end{equation}
with $\alpha_\pm=A(0)\pm A(1)$ and $\beta_\pm=B(0)\pm B(1)$. One of $\alpha_\pm$ vanishes and the other equals $\pm2$, so $\mathcal I_{\mathrm{par}}\le4$ on $\PC_0$, while $\mathcal I_{\mathrm{par}}(P^{\mathrm{par}}_v)=8v$. Equivalently, every local strategy gives the wrong parity in at least two of the eight contexts, an instance of Eq.~\eqref{eq:possibilistic}.

Guessing $z$ attains the bound. Take shared unbiased bits $g,r_1,r_2,r_3$. If $g=0$, Alice outputs $x\oplus r_1$ and Bob outputs $r_2$; if $g=1$, Alice outputs $r_1$ and Bob outputs $y\oplus r_2$. Charlie outputs $r_1\oplus r_2$ if $z=g$ and $r_3$ otherwise. The parity is $u_z$ when the guess is right and uniform otherwise, and all other moments vanish. Hence $P^{\mathrm{par}}_v\in\PC_0$ exactly for $v\le1/2$.

At $v=1$ the target is also not quantum. Suppose the outputs come from $\pm1$-valued observables $A_x,B_y,C_z$ on separate tensor factors acting on a state $\psi$. Perfect parity at contexts $000$ and $100$ gives $B_0C_0\psi=A_0\psi$ and $A_1B_0C_0\psi=-\psi$, hence $A_1A_0\psi=-\psi$. The same argument at $001$ and $101$ gives $A_1A_0\psi=\psi$, a contradiction. We do not determine the quantum threshold.

Now consider the relay with two messages, where $m_2$ takes $q$ values. Every input reaches Charlie, so the direct readout passes all $14$ equalities for every $q$. Write Charlie's row at $u$ as $\rho(u)=\bigl(c(u,0),c(u,1)\bigr)$. The target requires $\rho(u)=u$, which is the table quoted after Proposition~\ref{prop:three}. Since Charlie sees $u$ only through $m_2$, $\rho$ takes at most $q$ values. Each row contributes $\sum_zT(u,z)(-1)^{c(u,z)}=2-2\,\mathrm d(\rho(u),u)$, where $\mathrm d$ is the Hamming distance, so at most $q$ rows contribute $2$ and the rest at most $0$:
\begin{equation}
 \mathcal I_{\mathrm{dir}}:=\sum_{x,y,z}T\,E_C\le2q,\qquad
 \mathcal I_{\mathrm{dir}}(P^{\mathrm{dir}}_v)=8v.
 \label{eq:idir}
\end{equation}
The bound is attained. In all three protocols Alice sends $m_1=x$ and
Alice and Bob output $0$.
\begin{itemize}
\item $q=2$: for a shared random $j$, Bob sends $u_j$. Charlie outputs it if $z=j$ and a shared unbiased bit otherwise, giving $E_C=T/2$.
\item $q=3$: for shared random $k$ and $t$, Bob sends $u$ unless $u_k=t$, in which case he sends a third symbol. On that symbol Charlie outputs $t$ if $z=k$ and a shared unbiased bit otherwise, giving $E_C=3T/4$.
\item $q=4$: Bob sends $u$.
\end{itemize}
Mixing with $P^{\mathrm{dir}}_0$ covers smaller $v$, so $v^\star_{\mathrm{dir}}=q/4$, although the affine hull is the same for every $q$. The equalities cannot see this difference. The spanning argument of Theorem~\ref{thm:affine} isolates one indicator at a time, each needing one flag bit, whereas the task needs all four rows at once. Two broadcasts deliver $x$ and $y$ separately and realize the direct readout exactly.

The parity readout needs less communication. Alice can absorb the term $x$, which leaves $b\oplus c=z(x\oplus y)$. This is a PR task between Bob, with input $x\oplus y$ once he knows $x$, and Charlie, with input $z$. Without a message between them it obeys the local bound: repeating Eq.~\eqref{eq:ipar} with $B$ depending on $(x,y)$, or with $B$ and $C$ depending on $x$, still gives $4$. One relay bit or one broadcast by Alice therefore does not help. Two relay messages suffice even with binary $m_2$: Alice sends $m_1=x$, Bob sends $m_2=x\oplus y$, and the outputs are $a=r_1$, $b=x\oplus r_1\oplus r_2$, and $c=r_2\oplus z\,m_2$. A broadcast of $z$ by Charlie also suffices, with $a=x\bar z\oplus r_1$, $b=yz\oplus r_2$, and $c=r_1\oplus r_2$, where $\bar z=1\oplus z$.

The weights and inequalities above meet at $v^\star$, so the membership program~\eqref{eq:membership} needs no numerics. Eq.~\eqref{eq:noise} gives $\eta^\star=1/2$ for $\mathcal I_{\mathrm{par}}\le4$ and $\eta^\star=1-q/4$ for Eq.~\eqref{eq:idir}, so both witnesses are optimal along their noise lines. The equalities~\eqref{eq:seleq} instead detect every $v>0$. With $n$ trials per context, Hoeffding's inequality and a union bound give $|\widehat{\mathcal I}-\mathcal I|\le8\sqrt{2\log(16/\alpha)/n}$ with probability at least $1-\alpha$, which yields a level-$\alpha$ test for either inequality.

Low-order statistics miss the parity readout. For every $v$, $\Pi_2(P^{\mathrm{par}}_v)=0$ is the centre of the ball of Eq.~\eqref{eq:ball}, so every valid inequality on moments of order at most two has slack at least $\|\bm c\|_\infty$ there. Like the $\sigma_x/\sigma_y$ GHZ target, the parity readout is detectable only at order three, whereas the direct readout is exposed by the single-agent moment $E_C$.

For agents (Sec.~\ref{sec:agents}), Corollary~\ref{cor:qdt} gives $\mathcal I_{\mathrm{par}}\le4$ for every admissible $\kappa$ when each response uses only its own input and shared prior information; the guessing strategy is such a rule and attains the bound. The marginal-preserving correction of Eq.~\eqref{eq:pdelta}, applied to Alice and Bob, changes only $E_{AB}$, by $4\delta$ at $\nu=1/2$, and so leaves $\mathcal I_{\mathrm{par}}$ unchanged.

For three superobservers with a friend and two settings each, the LF polytope equals $\PC_0$~\cite{HC2024}, so $\mathcal I_{\mathrm{par}}\le4$ is an LF inequality. For $v>1/2$, $P^{\mathrm{par}}_v$ violates it while satisfying every affine equality of the LF polytope. No LF inequality of order at most two detects the violation, and a broadcast of $z$ reproduces it. We do not determine whether quantum theory violates $\mathcal I_{\mathrm{par}}\le4$. With all inputs fixed, as in the Frauchiger--Renner protocol, a single context remains, and shared randomness reproduces both readouts.
\end{document}